\documentclass[a4paper,UKenglish,cleveref, autoref, thm-restate]{lipics-v2021}

\hideLIPIcs  

\usepackage{booktabs} 

\newcolumntype{Y}{>{\raggedright\arraybackslash}X}

\DeclareMathOperator{\rank}{rank} \DeclareMathOperator{\select}{select}
\DeclareMathOperator{\access}{access}

\newcommand{\BTHeight}[1]{H_{#1}}
\newcommand{\BlockTime}[1]{\tau_{#1}}
\newcommand{\Aset}[1]{\mathcal{A}_{#1}}

 \newcounter{dknotecounter}

\title{Online Computation of the Longest Repeating Suffix and Smallest Suffixient Sets via Incremental Run-Length BWT-based Indexes} 

\titlerunning{Online Computation of the Longest Repeating Suffix and Smallest Suffixient Sets} 

\author{Paola Bonizzoni}{Department of Computer Science, University of Milano-Bicocca, Italy \and \url{https://algolab.eu/people/bonizzoni/}}{paola.bonizzoni@unimib.it}{https://orcid.org/0000-0001-7289-4988}{}

\author{Younan Gao}{Department of Computer Science, University of Milano-Bicocca, Italy \and \url{https://web.cs.dal.ca/~younan/}}{younan.gao@unimib.it}{https://orcid.org/0000-0003-4984-2551}{}

\author{Dominik K\"{o}ppl}{ University of Yamanashi, Kofu, Japan \and \url{https://dkppl.de/}}{dkppl@yamanashi.ac.jp}{https://orcid.org/0000-0002-8721-4444}{JSPS KAKENHI Grant Number 25K21150}

\author{Gregory Kucherov}{LIGM, CNRS and Gustave Eiffel University, Marne-la-Vallée, France \and \url{http://igm.univ-mlv.fr/~koutcher/}}{gregory.kucherov@univ-eiffel.fr}{https://orcid.org/0000-0001-5899-5424}{}

\authorrunning{P. Bonizzoni, Y. Gao, D. K\"{o}ppl, and G. Kucherov}

\Copyright{Paola Bonizzoni, Younan Gao, Dominik K\"{o}ppl, and Gregory Kucherov}

\ccsdesc[500]{Theory of computation~Data structures design and analysis}

\keywords{Online algorithms, compressed data structures, run-length Burrows-Wheeler transform, longest repeating suffix, smallest suffixient sets, string repetitiveness}

\category{} 

\relatedversion{} 

\funding{Paola Bonizzoni and Younan Gao have received funding from the European Union’s Horizon 2020 research and innovation programme under the Marie Sk{\l}odowska-Curie grant agreement PANGAIA No.~872539, as well as from grant MIUR 2022YRB97K (PINC, \emph{Pangenome Informatics: From Theory to Applications}), funded by the European Union under the NextGenerationEU programme, Mission~4.}

\nolinenumbers 

\EventEditors{John Q. Open and Joan R. Access}
\EventNoEds{2}
\EventLongTitle{42nd Conference on Very Important Topics (CVIT 2016)}
\EventShortTitle{CVIT 2016}
\EventAcronym{CVIT}
\EventYear{2016}
\EventDate{December 24--27, 2016}
\EventLocation{Little Whinging, United Kingdom}
\EventLogo{}
\SeriesVolume{42}
\ArticleNo{23}

\begin{document}

\maketitle

\begin{abstract}
We revisit the online construction of \emph{smallest suffixient sets} and the online computation of the \emph{longest repeating suffix} (LRS). We give the first compressed-space online construction of smallest suffixient sets, and present two space-time trade-offs for both problems:
\begin{itemize}
\item $O(r\log n+n)$ bits of working space and $O(\log^2 n/\log \log n)$ worst-case time per character, and
\item $O(r\log n+n \log \log n)$ bits of working space and $O((\log n/\log \log n)^2)$ worst-case time per character.
\end{itemize}
Here, $r$ is the number of runs in the Burrows-Wheeler transform of the reverse of $T[1..n]$. In particular, for highly repetitive texts satisfying $r=O(n/\log n)$, the first trade-off uses $O(n)$ bits of working space, while the second uses $O(n\log\log n)$ bits.

We also prove that any deterministic online algorithm for computing LRS requires \(\Omega(n)\) bits of peak working space in the worst case, even over a constant-size alphabet. Through reductions from online LRS computation, we extend this lower bound to deterministic online algorithms maintaining either an arbitrary smallest suffixient set augmented with the length of the supermaximal right extension represented by each selected position, or the position-only smallest suffixient set obtained by selecting the rightmost occurrence of every such extension.

For constructing smallest suffixient sets, our algorithms are the first online solutions using compressed working space, improving the $O(n)$-word space required by previous online constructions.
For compressed-space online LRS computation, compared with the algorithm of Prezza and Rosone~[CiE 2020], our bounds improve their $O(\log^2 n)$ amortized time per character by factors of $\Theta(\log\log n)$ and $\Theta((\log\log n)^2)$, respectively, while also providing worst-case guarantees. 
\end{abstract}

\section{Introduction}

Computing the \emph{longest repeating suffix} (LRS) is a classical problem in which for a text $T[1..n]$, the value $LRS[i]$ stores the length of the longest suffix of $T[1..i]$ that occurs at least twice in $T[1..i]$. This information is useful when solving pattern analysis and data compression~\cite{ZivL77}, especially by analyzing repetitions in the text~\cite{KolpakovK99}, and has applications in deriving the Lempel-Ziv factorization~\cite{LempelZ76}.

\emph{Smallest suffixient sets}~\cite{cenzato2024suffixient} were recently introduced as a measure of string repetitiveness, motivated by the study of highly repetitive data such as genomic collections. The definition of suffixient sets is based on the notion of \emph{right-maximal substrings}: substrings that occur in the text and can be followed by at least two distinct characters. These substrings correspond to the branching points, or internal nodes, of the suffix tree. Suffixient sets capture this branching structure by covering all one-character extensions, also called \emph{right extensions}, of such right-maximal substrings.
Smallest suffixient sets are closely related to \emph{string attractors}~\cite{KempaP18}: a string attractor is a set of positions that intersects an occurrence of every distinct substring, and the size $\gamma$ of the smallest attractor is a fundamental measure of repetitiveness. However, computing a smallest attractor is NP-hard~\cite{KempaP18}. In contrast, the size $\chi$ of a smallest suffixient set can be computed in linear time~\cite{cenzato2024suffixient}, while still giving a meaningful repetitiveness measure. In particular, $\gamma \le \chi$, and $\chi$ is bounded by $2r$~\cite{cenzato2024suffixient}, where $r$ is the number of runs in the Burrows-Wheeler transform (BWT)~\cite{Burrows1994ABL} of the reverse string. 


Interestingly, LRS computation and smallest-suffixient-set construction are closely related.
The value $LRS[i]$ measures the longest repeated suffix ending at the newly read character $T[i]$, and its previous occurrences are witnessed by the right extensions represented by a smallest suffixient set of the reverse of $T[1..i]$. This connection motivates us to study online LRS computation together with the online construction of smallest suffixient sets.
We consider the online setting, where the text $T$ is read from left to right and the algorithm must update its output after each prefix $T[1..i]$. 
Such a setting is natural for incrementally generated texts and rules out the use of future characters or offline scans.
Moreover, worst-case update guarantees are desirable in applications where occasional slow updates, which may occur in amortized algorithms, cannot be tolerated.
The main challenge is therefore to compute $LRS[i]$, or maintain a smallest suffixient set for the current prefix, in compressed working space while guaranteeing worst-case update time.

\subparagraph*{Related work.}
The notion of the longest repeating suffix can be traced back to the suffix-tree construction of McCreight~\cite{McCreight76}. Previous online algorithms for computing the LRS follow two main lines of research.

The first line follows Weiner's algorithm~\cite{Weiner73}, explicitly maintaining an online suffix tree and therefore requiring $O(n)$ words of working space.
Amir et al.~\cite{amir2002online} gave an algorithm that computes each LRS value in $O(\log \sigma)$ amortized time per round, where $\sigma$ is the alphabet size. More recently, Sumiyoshi et al.~\cite{Sumiyoshi2024Spire} and K{\"o}ppl and Kucherov~\cite{KopplK26} de-amortized this approach and obtained $O((\log\log n)^c)$ worst-case time per round, for a constant $c$ depending on the alphabet size. 
The second line maintains an incremental BWT-based index of the online text, together with sampled prefix-array and LCS-array values. This direction was initiated by Okanohara and Sadakane~\cite{OkanoharaS08}, whose algorithm uses succinct working space and computes each LRS value in $O(\log^3 n)$ amortized time. The main bottleneck in their running time is the support for range minimum queries (RMQ) over the dynamic LCS array. Prezza and Rosone~\cite{prezza2020faster} observed that these range minimum queries are not necessary for LRS computation, and improved the time bound to $O(\log^2 n)$ amortized time per round in compressed working space. Thus, compared with the suffix-tree-based line, this line via a BWT-based index is more space-efficient, although previous bounds are amortized.

\begin{table}[htbp]
\centering
\caption{A summary of online algorithms for longest repeating suffix (LRS) and smallest suffixient set (SSS) computation.
Here, $H_k$ denotes the $k$-th-order empirical entropy.
In the bounds of~\cite{Sumiyoshi2024Spire, KopplK26}, the exponent $c$ is a constant that depends on the alphabet size.}
\label{tab:lpf_array_summary}
\smallskip
\small 
\begin{tabularx}{\textwidth}{p{2cm} p{2.2cm} p{3.8cm} p{2cm} Y}
\toprule
\textbf{Problem} & \textbf{Source} & \textbf{Space (bits)} & \textbf{Time} & \textbf{Remarks} \\
\midrule
LRS &\cite{amir2002online} & $O(n\log n)$ & $O(\log \sigma)$ & amortized \\
\addlinespace
LRS &\cite{OkanoharaS08} & $n \log \sigma + o(n \log \sigma)+O(n)$ & $O(\log^3 n)$ & amortized \\ 
\addlinespace
LRS &\cite{prezza2020faster} & $nH_k+o(n\log\sigma)+O(n)+\sigma\log n+o(\sigma\log n)$ & $O( \log^2 n)$ & amortized, $k\in o(\log_\sigma n)$ \\ 
\addlinespace
LRS &\cite{Sumiyoshi2024Spire,KopplK26} & $O(n\log n)$ & $O((\log\log n)^{c})$ & worst case \\ 
\hline
SSS & \cite{fujimaru2026smallestsufficientsetseffectiveness}& $O(n\log n)$ & $O(\log \sigma)$ & amortized \\ 
\addlinespace
SSS &\cite{koppl2026smallestsuffixientsetmaintenance} & $O(n\log n)$ & $O((\log \log n)^2)$ & worst case\\
\hline
Both &This paper & $O(n+r \log n)$  & $O(\frac{\log^2 n}{\log \log n})$ & worst case\\
\addlinespace
Both &This paper & $O(n\log \log n + r\log n)$  & $O((\frac{\log n}{\log\log n})^2)$ & worst case\\
\bottomrule
\end{tabularx}
\end{table}

Cenzato et~al.~\cite{cenzato2024suffixient} gave an $O(n+r\log\sigma)$-time construction of smallest suffixient sets in compressed working space under a streaming model~\cite{Boucher2019PFP,sanaullah2026rlbwt}, later improved to $O(n)$ time by Bonizzoni et~al.~\cite{BonizzoniGR26}.
In this model, the input text is fixed, and the BWT, LCS, and PA arrays are scanned sequentially in a single pass; in contrast, our online setting requires maintaining a smallest suffixient set for every prefix $T[1..i]$, without access to future characters.
In the online regime, Fujimaru et~al.~\cite{fujimaru2026smallestsufficientsetseffectiveness} gave the first algorithm for maintaining a smallest suffixient set, based on Ukkonen's online suffix-tree construction algorithm~\cite{ukkonen1995line}. Their algorithm takes $O(\log \sigma)$ amortized time per round. K\"oppl and Kucherov~\cite{koppl2026smallestsuffixientsetmaintenance} subsequently de-amortized the running time by using Weiner's suffix-tree construction~\cite{Weiner73}. Both algorithms explicitly maintain a suffix tree and therefore require $O(n)$ words of working space. 
A detailed comparison between previous results and ours is given in Table~\ref{tab:lpf_array_summary}.


\subparagraph{Our contributions and technique overview.}
We present the first online algorithm for constructing smallest suffixient sets in compressed working space. 
For online LRS computation in compressed space, we improve the $O(\log^2 n)$ amortized time per character of Prezza and Rosone~\cite{prezza2020faster} to worst-case bounds, obtaining two space-time trade-offs.
Let $r$ denote the number of runs in the BWT of the reverse of the text. Our algorithms achieve the following bounds for both online LRS computation (Theorem~\ref{theorem-LPFA}) and online construction of smallest suffixient sets (Theorem~\ref{theorem-SSS}):
\begin{itemize}
\item $O(r\log n+n)$ bits of working space and $O(\log^2 n/\log\log n)$ worst-case time per round;
\item $O(r\log n+n\log\log n)$ bits of space and $O((\log n/\log\log n)^2)$ worst-case time per round.
\end{itemize}
In particular, for highly repetitive texts satisfying $r=O(n/\log n)$, the first trade-off uses $O(n)$ bits of working space, while the second uses $O(n\log\log n)$ bits.


We also prove in Theorem~\ref{theorem-lower-bound-LRS} that any
deterministic online algorithm for computing LRS requires
\(\Omega(n)\) bits of peak working space in the worst case, even over a
constant-size alphabet. We then establish two reductions from online LRS
computation to online suffixient-set maintenance. The first shows that
the current LRS value can be recovered from any smallest suffixient set
augmented with the length of the supermaximal right extension represented
by each selected position, which we call a \emph{length-annotated SSS}.
The second shows that, when the rightmost occurrence of every
supermaximal right extension is selected, the LRS value can be recovered
from the selected positions alone; we call this canonical set the
\emph{rightmost SSS}. Consequently, the same lower bound applies both to
maintaining an arbitrary length-annotated SSS
(Theorem~\ref{theorem-lower-bound-SSS}(a)) and to maintaining the
position-only rightmost SSS
(Theorem~\ref{theorem-lower-bound-SSS}(b)).
To the best of our knowledge, these are the first explicit working-space lower bounds for online LRS computation and online construction of smallest suffixient sets.
These lower bounds establish that, over constant-size alphabets, the compressed-space bounds of Okanohara and Sadakane and of Prezza and Rosone are optimal up to constant factors. Moreover, when \(r=O(n/\log n)\), our first trade-off uses \(O(n)\) bits, matching the general worst-case lower bound in terms of \(n\).


To obtain these bounds, instead of explicitly maintaining an online
suffix tree as in previous work~\cite{amir2002online,Sumiyoshi2024Spire,
	koppl2026smallestsuffixientsetmaintenance}, we maintain an incremental run-length BWT-based index together with sampled representations of the prefix array ($PA$) and the longest common suffix array ($LCS$). As in previous compressed-space solutions~\cite{OkanoharaS08,
	prezza2020faster}, we partition $PA$ and $LCS$ into blocks of size $O(L)$. The solution of Prezza and Rosone~\cite{prezza2020faster} retrieves an arbitrary $PA$ entry by performing $O(\log n)$ successive $LF/FL$-mapping
steps~\cite{FerraginaM00}, each taking $O(\log n)$ time. We avoid these repeated mappings by
enumerating the entries of a block through $\phi$ queries, where $\phi(PA[j])=PA[j-1]$.
By adapting the reducibility of permuted LCP values
\cite{KarkkainenMP09,GagieNP20} to the prefix-array/LCS setting, each
$\phi$ query and the corresponding $LCS$ value can be recovered from samples associated with the run tops of the BWT. We maintain these samples
in a \emph{dynamic fusion} tree~\cite{DBLP:journals/jcss/FredmanW94, PatrascuT14}, which
supports the required predecessor and successor searches in $O(1+\log_w r)$ time. Consequently, enumerating a given block,
and hence accessing an arbitrary $PA$ or $LCS$ entry, requires $O(L(1+\log_w r))$ time.
Choosing the block size $L$ yields our two space-time trade-offs.

\section{Preliminaries}
\label{sect-prel}


We present all results in the word-RAM model with word size $w=\Theta(\log n)$ bits, where $n$ denotes the final length of the text. To instantiate the two space-time trade-offs, we assume that $n$ is known from the first round, so that the block-size parameter $L$ and the B-tree fan-out $F$ can be fixed accordingly. The general bounds for arbitrary fixed $L$ and $F$ do not otherwise require advance knowledge of $n$.

Let $\Sigma = \{1, \dots, \sigma\}$ be an integer alphabet.
Given a sequence $A[1..n']$ over $\Sigma$, the query $\rank_c(A,j)$ returns the number of occurrences of the symbol $c$ in $A[1..j]$, for $c\in\Sigma$. The query $\select_c(A,j)$ returns the position of the $j$-th occurrence of $c$ in $A$, for $1\le j\le \rank_c(A,n')$; if no such occurrence exists, then $\select_c(A,j)$ returns $n'+1$.

Let $\$ $ be a unique sentinel symbol smaller than every character in $\Sigma$. Let $S[1..m]$ be a string over $\Sigma \cup \{\$\}$ whose first character is $\$$, and assume that $\$$ does not occur elsewhere in $S$. We write $\overleftarrow{S}$ for the reversal of $S$. For a character $c \in \Sigma$, we write $Sc$ or $S\cdot c$ for the string obtained by appending $c$ to the end of $S$.

The prefix array $PA(S)[1..m]$ is the permutation of $\{1,\ldots,m\}$ that orders the prefixes of $S$ colexicographically, that is, $S[1..PA(S)[1]],\ldots,S[1..PA(S)[m]]$ are sorted by the lexicographic order of their reversals. 
See Figure~\ref{fig:exp-lcs-pa-bwt-update} for an example.
We denote by $PA^{-1}(S)[1..m]$ the inverse permutation of $PA(S)$, namely, $PA^{-1}(S)[p]=j$ if and only if $PA(S)[j]=p$. 
The Burrows-Wheeler transform~\cite{Burrows1994ABL} $BWT(\overleftarrow{S})[1..m]$ of $\overleftarrow{S}$ is defined as follows: $BWT(\overleftarrow{S})[j]=S[PA(S)[j]+1]$ if $PA(S)[j]<m$, and $BWT(\overleftarrow{S})[j]=\$$ otherwise. 
A run in $BWT(\overleftarrow{S})$ is a maximal interval $[b..e]$ such that all entries $BWT(\overleftarrow{S})[b..e]$ are equal. For a run $[b..e]$, we call position $b$ the \emph{run top} and position $e$ the \emph{run bottom}.

The \emph{longest common suffix array} $LCS(S)[1..m]$ is defined by $LCS(S)[1]=0$; for $2\le j\le m$, $LCS(S)[j]$ is the length of the longest common suffix of the two adjacent prefixes $S[1..PA(S)[j-1]]$ and $S[1..PA(S)[j]]$.
Following the \emph{permuted longest common prefix} array~\cite[Section~4.2]{KarkkainenMP09}, we define the \emph{permuted longest common suffix array} $PLCS(S)[1..m]$ by $PLCS(S)[j]=LCS(S)[PA^{-1}(S)[j]]$ for every $j\in[1..m]$. 

%
When the string $S$ is clear from the context, we write $PA$, $PA^{-1}$, $LCS$, and $PLCS$ for $PA(S)$, $PA^{-1}(S)$, $LCS(S)$, and $PLCS(S)$, respectively.


\begin{figure}[t]
    \centering
    \includegraphics[width=1\textwidth]{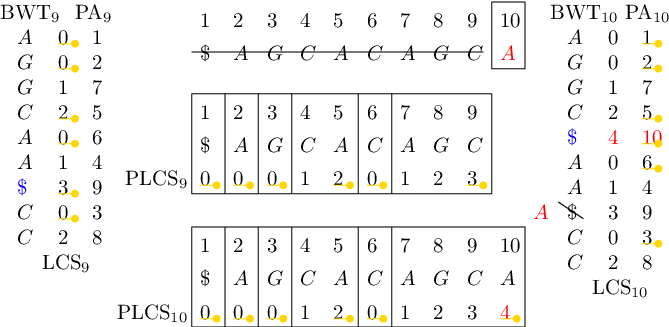}
    \caption{Example of updating the arrays LCS, BWT, PA, and PLCS. The text prefix $T[1..9]=\$AGCACAGC$. Appending the character $A$ yields $T[1..10]=\$AGCACAGCA$. The left and right tables show BWT, PA, and LCS before and after the update, respectively, while the middle tables show the corresponding PLCS values indexed by text position. Yellow dots mark irreducible PLCS positions, equivalently prefix-array entries whose inverse positions are run tops in the corresponding BWT. In the update from round $9$ to round $10$, the new prefix is inserted at co-lexicographic rank $4$, and the affected entries are highlighted in red.}
    \label{fig:exp-lcs-pa-bwt-update}
\end{figure}

\begin{restatable}{proposition}{runbound}\label{pro-bound-on-r}
Let $r_i$ denote the number of runs in $BWT(\overleftarrow{T[1..i]})$ for $i\in[1..n]$.
Then, it follows that $r_i\le r_n+2$, for each $i$.
\end{restatable}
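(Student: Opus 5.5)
The plan is to remove the sentinel from the BWT and show that the run count of the resulting sequence is monotone non-decreasing in $i$. Write $B_i$ for $BWT(\overleftarrow{T[1..i]})$. Let $B'_i$ be the sequence obtained from $B_i$ by deleting its unique entry equal to $\$$, and let $r'_i$ be the number of runs of $B'_i$. The claim will then follow from three inequalities:
\[
r_i \le r'_i + 2, \qquad r'_i \le r'_n, \qquad r'_n \le r_n .
\]
Together they give $r_i \le r_n+2$, and in fact slightly more.

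\textbf{Step 1 (relating $r_i$ and $r'_i$).} In $B_i$ the symbol $\$$ occurs exactly once, so it forms a run of length one by itself. Deleting it removes that run. If its two neighbours are equal, they additionally merge into a single run. Hence $r'_i \in \{r_i-1, r_i-2\}$ (with the obvious conventions when $\$$ is at an end, or $B'_i$ is empty), which gives both $r_i \le r'_i+2$ and $r'_n \le r_n - 1 \le r_n$.

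\textbf{Step 2 (monotonicity of $r'_i$; the main step).} Compare $B_i$ with $B_{i+1}$, where $c=T[i+1]$. The entries of $B_i$ correspond to the prefixes $T[1..p]$ for $p\le i$, listed in colexicographic order. Appending $c$ does not change the colexicographic order among these old prefixes, since they are the same strings. Only one prefix of $T[1..i+1]$ is new, namely the whole string $T[1..i+1]$, and it is inserted at some rank.

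The BWT character attached to an old prefix $T[1..p]$ with $p<i$ is still $T[p+1]$. The one exception is $p=i$, whose character changes from $\$$ to $c$. The new prefix $T[1..i+1]$ receives $\$$.

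Therefore, after deleting $\$$ from both sequences, $B'_{i+1}$ is exactly $B'_i$ with the single character $c$ inserted at the position where prefix $i$ sits. Inserting one character into a sequence never decreases its number of runs. It either extends an existing run, creates a new run at a boundary, or splits a run into three pieces. Thus $r'_{i+1}\ge r'_i$, and by induction $r'_i\le r'_n$.

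The only care needed here is to verify rigorously that the colexicographic order of the old prefixes, and their attached characters other than prefix $i$'s, are unchanged. This follows directly from the definitions of $PA$ and $BWT$ in the preliminaries.

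\textbf{Step 3 (combining).} Chaining the three inequalities gives $r_i\le r'_i+2\le r'_n+2\le r_n+2$ (even $r_n+1$), as claimed. Degenerate small cases, such as $i=1$ where $B'_1$ is empty, are handled by the same conventions used in Step 1.
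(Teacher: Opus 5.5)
Your proof is correct, and it reaches the bound by a different decomposition than the paper. The paper argues globally, in one step. Since $R_i=\overleftarrow{T[1..i]}$ is a suffix of $R=\overleftarrow{T[1..n]}$, the suffixes of $R_i$ keep their relative order inside $R$. Hence $BWT(R_i)$ is the subsequence of $BWT(R)$ induced by those suffixes, with one entry replaced by $\$$: for $i<n$, the entry of $R_i$ itself, originally $T[i+1]$. A subsequence has no more runs, and one substitution adds at most two.

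You instead strip the sentinel first and argue incrementally. Your observation that $B'_{i+1}$ is $B'_i$ with one character inserted is exactly the online update rule from the preliminaries (replace $\$$ by $c$, insert a fresh $\$$). This makes the sentinel-free run count monotone in $i$. Unrolling your induction shows that $B'_i$ is a subsequence of $B'_n$, which is the paper's key observation with the sentinel factored out, so the underlying fact is the same.

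The paper's version is shorter and needs no induction. Yours additionally gives monotonicity of $r'_i$. Because you also charge the singleton $\$$-run in $B_n$, it also gives the sharper bound $r_i\le r_n+1$. The paper's argument could reach the same sharpening by noting that, for $i<n$, the induced subsequence avoids the $\$$ of $BWT(R)$.
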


\begin{proof}
Let $R=\overleftarrow{T[1..n]}$ and $R_i=\overleftarrow{T[1..i]}$. Then $R_i$ is a
suffix of $R$. The suffixes of $R_i$ appear in the same relative lexicographic order
as the corresponding suffixes of $R$. Hence $BWT(R_i)$ is obtained from the
subsequence of $BWT(R)$ induced by those suffixes, except that the character
preceding the whole suffix $R_i$ is replaced by the sentinel $\$$. Taking a
subsequence cannot increase the number of runs, and changing one symbol can
increase the number of runs by at most two. Therefore $r_i\le r_n+2$.
\end{proof}



\subparagraph*{Reducible and irreducible values of $PLCS$.}
For a text position $p$, we define $\phi(p)$ to be the prefix-array entry immediately preceding $p$ in co-lexicographic order, that is, if $p=PA[j]$ with $j>1$, then $\phi(p)=PA[j-1]$; otherwise, $\phi(p)=0$ with $p=PA[1]$.
We say that $PLCS[p]$ is \emph{reducible} if $p<|S|$, $\phi(p)<|S|$, and $S[p+1]=S[\phi(p)+1]$. Otherwise, we say that $PLCS[p]$ is \emph{irreducible}, and we also call $p$ an irreducible position. Equivalently, $p$ is irreducible if and only if $PA^{-1}[p]$ is a run top of $BWT(\overleftarrow{S})$.
Figure~\ref{fig:exp-lcs-pa-bwt-update} illustrates the reducible and irreducible $PLCS$ values.

\begin{restatable}{proposition}{irreducibleprop}\label{prop-irreducible}
    Let $p_1,\ldots,p_r$ be the sorted list of values $p$ such that
$PA^{-1}(S)[p]$ is a run top of $BWT(\overleftarrow{S})$, and define
$p_0=0$.
For any $1\le j\le |S|$, let $p_h$ be the successor of $PA[j]$ in this list,
that is, $p_{h-1} < PA[j] \le p_h$.
Then, $LCS[j]=PLCS[PA[j]]=PLCS[p_h] - (p_h-PA[j])$.
Moreover, if $j>1$, then
$\phi(PA[j])=\phi(p_h) - (p_h-PA[j]).$
\end{restatable}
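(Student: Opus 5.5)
The plan is the standard "reducibility" argument of Kärkkäinen et al., transposed from suffixes and LF to prefixes and the step $p\mapsto p+1$. The core is a one-step lemma: if $p$ is reducible, then $\phi(p+1)=\phi(p)+1$ and $PLCS[p+1]=PLCS[p]+1$. Once this lemma holds, the statement follows by induction along the chain $PA[j], PA[j]+1,\ldots,p_h$.

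\textbf{One-step lemma.} Let $p=PA[k]$ with $k>1$, and suppose $PLCS[p]$ is reducible. Write $q=\phi(p)=PA[k-1]$, so $p,q<|S|$ and $c:=S[p+1]=S[q+1]$. Since $S[1..q]$ precedes $S[1..p]$ colexicographically, appending the same character $c$ preserves the order. Hence $S[1..q+1]$ precedes $S[1..p+1]$.

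We must show that no prefix $S[1..t]$ lies strictly between them. Such a $t$ would have $t\ge 2$ and $S[t]=c$, because everything between two strings ending in $c$ also ends in $c$ (the first character $\$$ is unique and smallest, so the prefix of length $1$ is the global minimum and cannot lie between them). Then $S[1..t-1]$ would lie strictly between $S[1..q]$ and $S[1..p]$, contradicting adjacency.

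This gives $\phi(p+1)=q+1$. The longest common suffix of $S[1..q+1]$ and $S[1..p+1]$ is the common suffix of $S[1..q]$ and $S[1..p]$ extended by $c$, so $PLCS[p+1]=PLCS[p]+1$. In BWT terms this is exactly the LF-mapping argument: positions $k-1$ and $k$ carry the same BWT character, so $LF$ maps them to adjacent rows.

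\textbf{Induction.} Given $j$, set $p=PA[j]$ and let $p_h$ be its successor among the irreducible positions, with $p_{h-1}<p\le p_h$. Every position $p'\in[p,p_h-1]$ is reducible, since it lies strictly between $p_{h-1}$ and $p_h$ and the list contains all irreducible positions.

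We also need such a $p'$ to satisfy $\phi(p')\neq 0$, i.e., $PA^{-1}[p']>1$. Row $1$ is always a run top, so if $PA^{-1}[p']=1$ then $p'$ would be irreducible. The same fact shows that $p_h$ exists: the last position $|S|$ is irreducible by definition, so the successor is well defined.

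Applying the one-step lemma repeatedly from $p'=p$ up to $p_h-1$ gives $PLCS[p_h]=PLCS[p]+(p_h-p)$. If $j>1$, it also gives $\phi(p_h)=\phi(p)+(p_h-p)$. Rearranging yields both claimed identities, and $LCS[j]=PLCS[PA[j]]$ holds by the definition of $PLCS$.

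\textbf{Main obstacle.} The delicate part is the boundary bookkeeping. First, $\phi(p')$ must be a genuine position and not $0$ along the chain, which the run-top-at-row-$1$ remark handles. Second, the case $j=1$: there $p$ is itself irreducible, so $p=p_h$ and the $LCS$ identity is trivial. Third, the equivalence between "reducible" and "not a run top" must be checked: $BWT[k]=S[p+1]$ and $BWT[k-1]=S[q+1]$ when $p,q<|S|$, and the $\$$ entry forms its own run, so it is always a run top. I would verify these carefully before writing out the adjacency argument.
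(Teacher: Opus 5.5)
Your proposal is correct and follows essentially the same route as the paper. Both rest on the one-step recurrence $\phi(p+1)=\phi(p)+1$ and $PLCS[p+1]=PLCS[p]+1$ for reducible $p$, which holds because appending the common character $c$ preserves colexicographic adjacency, and then telescope along the reducible positions $PA[j],\ldots,p_h-1$. You also spell out details the paper leaves implicit or delegates to K\"{a}rkk\"{a}inen et al.: the no-intermediate-prefix argument, $\phi(p')\neq 0$ along the chain, and the existence of $p_h$ because $|S|$ is irreducible.
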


\begin{proof}
    If $PLCS[p]$ is reducible, then the same argument as in~\cite[Lemma~4]{KarkkainenMP09}, adapted to the prefix-array/LCS setting, gives $PLCS[p]=PLCS[p+1]-1$. 
    
    Recall that the irreducible positions are exactly the values $p$ such that $PA^{-1}[p]$ is a run top of $BWT(\overleftarrow{S})$. Hence, by the choice of $p_h$, all positions $PA[j],PA[j]+1,\ldots,p_h-1$ are reducible, while $p_h$ is irreducible. 
Therefore, applying the recurrence to $p=PA[j],PA[j]+1,\ldots,p_h-1$ gives
\[
\begin{array}{rcl}
	PLCS[PA[j]]       &=& PLCS[PA[j]+1]-1,\\
	PLCS[PA[j]+1]     &=& PLCS[PA[j]+2]-1,\\
	& & \vdots\\
	PLCS[p_h-1]       &=& PLCS[p_h]-1.
\end{array}
\]
By telescoping, we obtain $PLCS[PA[j]]=PLCS[p_h]-(p_h-PA[j])$.
    Since $LCS[j]=PLCS[PA[j]]$, we have $LCS[j]=PLCS[PA[j]]=PLCS[p_h]-(p_h-PA[j])$.

    Assume $j>1$, so that $\phi(PA[j])$ is nonzero. For every reducible position $p$, the same argument as above also implies $\phi(p+1)=\phi(p)+1$: the two prefixes $S[1..\phi(p)]$ and $S[1..p]$ are adjacent in co-lexicographic order, and, since $S[\phi(p)+1]=S[p+1]$, their extensions $S[1..\phi(p)+1]$ and $S[1..p+1]$ remain adjacent in the same order. Therefore, repeatedly applying this recurrence over the reducible positions $PA[j],PA[j]+1,\ldots,p_h-1$ gives $\phi(p_h)=\phi(PA[j])+(p_h-PA[j])$. Hence, $\phi(PA[j])=\phi(p_h)-(p_h-PA[j])$.
\end{proof}

	\begin{example}\label{app-example-irreducible}
	Consider the text $T[1..10]=\$AGCACAGCA$ shown in Figure~\ref{fig:exp-lcs-pa-bwt-update}. The run tops of $BWT(\overleftarrow{T[1..10]})$ occur at prefix-array positions $1,2,4,5,6,9$. Hence the corresponding prefix-array entries are $1,2,5,10,6,3$, and the sorted list of irreducible positions is $p_1,\ldots,p_r=1,2,3,5,6,10$. Now take $j=3$. Then $PA[j]=7$, and the successor of $PA[j]$ among the irreducible positions is $p_h=10$. Proposition~\ref{prop-irreducible} gives $LCS[3]=PLCS[10]-(10-7)=4-3=1$, which agrees with the value shown in the figure.  The same example also illustrates the formula for $\phi$: since $PA[3]=7$, we have $\phi(PA[3])=PA[2]=2$; on the other hand, $\phi(10)=5$, and therefore $\phi(PA[3])=\phi(10)-(10-7)=5-3=2$.
    	\end{example}

\subparagraph*{Defining longest repeating suffixes and suffixient sets.}\phantomsection\label{def-lpfa} The longest repeating suffix array, denoted by \(LRS[1..n]\), is defined as follows. For each position \(1\le i\le n\), \(LRS[i]\) is the length of the longest suffix of \(T[1..i]\) that occurs previously in \(T\), ending at some position \(j<i\). Equivalently, \(LRS[i]=\max\{\ell\ge 0 : \text{there exists } j<i \text{ such that } T[i-\ell+1..i]=T[j-\ell+1..j]\}\).
As shown in Figure~\ref{fig:exp-lcs-pa-bwt-update}, $LRS[10]=4$, since $AGCA$ is the longest suffix of $T[1..10]$ that appears twice and its length is $4$.

Let \(S\) be a string and let \(\mathcal{F}_S\) denote the set of all substrings of \(S\).
A substring \(u\in\mathcal{F}_S\) is \emph{right-maximal} if there exist two distinct
characters \(a,b\in\Sigma \cup \{\$\}\) such that \(ua,ub\in\mathcal{F}_S\). For a right-maximal
substring \(u\), a pair \((u,a)\) is called a \emph{right extension} if \(ua\in\mathcal{F}_S\).
Let \(\mathcal{E}_r(S)\) denote the set of all right extensions of \(S\).
A right extension $(u,a)\in\mathcal{E}_r(S)$ is called a \emph{supermaximal right extension}, or an \emph{SRE}, if there is no character $c\in\Sigma$ such that $(cu,a)$ is also a right extension. Let $\mathcal{S}_r(S)$ denote the set of all SREs of $S$.
A \emph{smallest suffixient set} of $S$ is a set of positions $P\subseteq \{1,\ldots,|S|\}$ obtained by choosing, for every SRE $(u,a)\in\mathcal{S}_r(S)$, one occurrence of $ua$ in $S$ and storing its ending position in $P$. Equivalently,
\(P\) is a smallest set of positions such that, for every right extension
\((u,a)\in\mathcal{E}_r(S)\), there exists an occurrence of \(ua\) in \(S\) ending at a
position of \(P\). 
Let $\chi(S)=|\mathcal{S}_r(S)|$ denote the size of a smallest suffixient set. Then $\chi(S)$ is at most twice the number of runs in $BWT(\overleftarrow{S})$~\cite[Lemma~10]{cenzato2024suffixient}.

For each SRE \((u,a)\in\mathcal{S}_r(S)\), let $\rho(u,a)=\max\{j : S[j-|ua|+1..j]=ua\}$ denote the ending position of the rightmost occurrence of $ua$ in $S$. A \emph{canonical rightmost-occurrence smallest suffixient set} of \(S\), or simply the \emph{rightmost SSS} (\emph{RSSS}), is ${RSSS}(S)=\{\rho(u,a) : (u,a)\in\mathcal{S}_r(S)\}$. It is uniquely determined by \(S\) and is therefore canonical.
A \emph{length-annotated smallest suffixient set}, or
\emph{length-annotated SSS}, of \(S\) is a set of records $\widehat{P}=\{\langle j,|ua|\rangle : (u,a)\in\mathcal{S}_r(S)\}$, where, for each SRE \((u,a)\), \(j\) is the ending position of a chosen occurrence of \(ua\) in \(S\). Thus, the projection of \(\widehat{P}\) onto its first components is a smallest suffixient set of \(S\). The \emph{length-annotated rightmost SSS} of \(S\) is $\widehat{RSSS}(S)=\{\langle \rho(u,a),|ua|\rangle : (u,a)\in\mathcal{S}_r(S)\}$.
It augments every position in $RSSS(S)$ with the length of the corresponding SRE.


\subparagraph*{Online updates of the run-length BWT, PA, and LCS.}

Before concluding the preliminaries, we recall the standard online update rules for the $BWT$, the prefix array, and the $LCS$ array under the co-lexicographic ordering of prefixes~\cite{OkanoharaS08}. In our work, the $BWT$ relevant operations are supported by a dynamic run-length BWT representation~\cite[Theorem~2]{DBLP:journals/algorithmica/PolicritiP18}.

\begin{lemma}[{\cite[Theorem~2]{DBLP:journals/algorithmica/PolicritiP18}}]
\label{lem-rlBWT-online}
Let $r_i$ denote the number of runs in $BWT(\overleftarrow{T[1..i]})$.
There exists a data structure using $O(r_i)$ words that represents
$BWT(\overleftarrow{T[1..i]})$ and supports each of the following operations
in $O(\log r_i)$ time:
\begin{itemize}
    \item $\access$, $\rank$ and $\select$ queries on $BWT(\overleftarrow{T[1..i]})$;
    \item computing \footnote{This part is implemented by an auxiliary dynamic cumulative-count structure using $O(\sigma_i) \subseteq O(r_i)$, where $\sigma_i$ is the alphabet size for $T[1..i]$.}, for any $c\in\Sigma$, the number $C[c]$ of occurrences in
    $BWT(\overleftarrow{T[1..i]})$ of symbols that are smaller than $c$;
    \item inserting a symbol into $BWT(\overleftarrow{T[1..i]})$; and
    \item replacing the unique terminator $\$$ by a new symbol in $BWT(\overleftarrow{T[1..i]})$.
\end{itemize}
\end{lemma}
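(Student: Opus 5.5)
This lemma is cited from Policriti and Prezza \cite[Theorem~2]{DBLP:journals/algorithmica/PolicritiP18}. My plan is therefore to reconstruct the data structure behind that result and check that it supports each listed operation within the stated bounds.

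\textbf{Representation.} I would represent $BWT(\overleftarrow{T[1..i]})$ as its sequence of $r_i$ runs, each stored as a pair (character, length). From this sequence I would build two dynamic structures.
\begin{itemize}
\item A dynamic prefix-sum structure over the run lengths, in run order. This is a balanced search tree with $O(r_i)$ nodes, where each node stores subtree sums of lengths. It lets us locate the run containing a given BWT position and compute the starting position of a given run in $O(\log r_i)$ time.
\item For each character $c$, a second dynamic prefix-sum structure over the lengths of the $c$-runs only. Each run in the global tree keeps a pointer to its node in the per-character tree, and vice versa. Every run lies in exactly one per-character tree, so the total space is $O(r_i)$ words.
\end{itemize}

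\textbf{Queries.}
\begin{itemize}
\item $\access(j)$: descend the global tree to the run containing $j$ and return its character.
\item $\rank_c(j)$: find the run containing $j$ and the last $c$-run starting at or before $j$, using order-statistics between the two trees. Then add the prefix sum of the $c$-run lengths up to that run, plus the partial overlap if the run containing $j$ is itself a $c$-run.
\item $\select_c(k)$: descend the $c$-tree by prefix sums to the $c$-run holding the $k$-th occurrence, follow its pointer to the global tree, and compute the run's starting position.
\item $C[c]$: maintain a dynamic prefix-sum tree over the alphabet, keyed by the characters present, holding total counts. This uses $O(\sigma_i)\subseteq O(r_i)$ space.
\end{itemize}
Each of these is a constant number of root-to-leaf traversals, so each costs $O(\log r_i)$.

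\textbf{Updates.} Inserting symbol $c$ at position $j$ splits into three local cases.
\begin{itemize}
\item The symbol lands inside or adjacent to a $c$-run. Increment that run's length and propagate the sums up both trees.
\item It lands strictly inside a run of another character. Split that run into two and insert a new unit $c$-run between the pieces. This is $O(1)$ insertions into balanced trees.
\item It lands between two runs, neither of which is a $c$-run. Insert a new unit run.
\end{itemize}
Replacing the terminator $\$$ by a new symbol is handled as a deletion followed by an insertion at the same position. Deletion may merge the two neighbouring runs if they share a character, which again takes $O(1)$ tree operations. In every case $C$ is updated as well.

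The main obstacle is keeping the per-character and global trees consistent under splits and merges, so that $\rank$ can find the predecessor $c$-run of an arbitrary run in $O(\log r_i)$ time. I would handle this by storing in each global-tree node the set of characters occurring in its subtree. If that is too costly, I would instead use the standard trick of locating the predecessor via the $c$-tree ordered by global run rank: order-statistics in the global tree give the global rank of each $c$-run, and a single descent of the $c$-tree then compares those ranks.
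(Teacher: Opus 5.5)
The paper gives no proof of its own here. It imports the structure from Policriti and Prezza and only adds, as you do, that $C[c]$ comes from an auxiliary cumulative-count structure over the $\sigma_i\le r_i$ symbols present.

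Your handling of $\access$, $\select_c$, $C[c]$, and the run split/merge bookkeeping is fine. The gap is the step everything else depends on. Given an arbitrary run (the run containing $j$, or a newly created run), you must find the last $c$-run preceding it in the global order, i.e., its place in the $c$-tree. $\rank_c$ needs this, and so does every insertion that creates a new unit $c$-run, and hence also the replacement of $\$$, because the new node must be inserted at the correct position of the $c$-tree. You do not achieve this in $O(\log r_i)$ time.

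Your first option stores at each global-tree node the set of characters in its subtree. As bitmaps this costs $\Theta(\sigma_i)$ bits per node. As lists it costs $\Theta(r_i\log r_i)$ words in the worst case. You would also need per-(node, character) counts to survive deletions and merges, and whole sets must be recomputed under rotations. Either way the $O(r_i)$-word bound fails unless $\sigma_i$ is constant. Your second option descends the $c$-tree comparing global ranks. It pays an $O(\log r_i)$ upward order-statistic walk in the global tree for every visited $c$-node, so it costs $O(\log^2 r_i)$. This is tight already over a binary alphabet, where a single character has $\Theta(r_i)$ runs. That extra logarithm would dominate the paper's bounds.

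The missing ingredient is constant-time order comparison between arbitrary runs. Keep the global run sequence in an order-maintenance structure, for instance Dietz and Sleator's, with $O(1)$ worst-case insertions, deletions, and comparisons. Each BWT update changes only $O(1)$ runs, so this costs $O(1)$ per update. Then a single root-to-leaf descent of the $c$-tree, comparing each visited $c$-run with the target run in $O(1)$ time, finds the predecessor $c$-run and its order statistic in $O(\log r_i)$ time. This repairs both $\rank_c$ and the placement of new $c$-runs during insertion and replacement.

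Equivalently, you can store the run heads as a dynamic string supporting $\rank$ and insertion in $O(\log r_i)$ time. This is exactly the query (``how many $c$-runs precede run $k$'') that the run-head string answers in RLFM-style representations. Note that a plain dynamic wavelet tree over the run heads would only give $O(\log\sigma_i\cdot\log r_i)$.
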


Let $S$ be the current string and let $c$ be the next character.  We describe how to transform $BWT(\overleftarrow{S})$, $PA(S)$, and $LCS(S)$ into $BWT(c\overleftarrow{S})$, $PA(Sc)$, and $LCS(Sc)$.

\phantomsection\label{def-online-update-notation}

Let $\ell$ denote the co-lexicographic rank of the new prefix $Sc$
among all prefixes of $Sc$, or equivalently, the insertion position of
$Sc$ among the old prefixes of $S$, and let $s$ denote the position of $\$$ in the old $BWT(\overleftarrow{S})$.
Among them, $s$ can be obtained by a $\select_{\$}$ query, and $\ell$ is essentially $C[c] + \rank_c(BWT(\overleftarrow{S}),s) + 1$, where $C[c]$ is the number of symbols in $BWT(\overleftarrow{S})$ that are smaller than $c$.
By Lemma~\ref{lem-rlBWT-online}, both $s$ and $\ell$ can be computed in $O(\log r_i)$ time.

To obtain $BWT(c\overleftarrow{S})$, replace the unique symbol $\$$ at the position $s$ with $c$, and insert a new symbol $\$$ at
position $\ell$.
To obtain $PA(Sc)$, insert $|Sc|$ at position $\ell$ of $PA(S)$.
Figure~\ref{fig:exp-lcs-pa-bwt-update} shows an example on the updates.

It remains to update the $LCS$ array. 
Let $b<s$ denote the largest position in the old $BWT(\overleftarrow{S})$ where the symbol $c$ appears, 
and let $f > s$ denote the smallest position in the old $BWT(\overleftarrow{S})$ where the symbol $c$ appears. 
More precisely, let $q=\rank_c(BWT(\overleftarrow{S}),s)$.
Then, $b=\select_c(BWT(\overleftarrow{S}),q)$ if $q>0$; or $b$ does not exist, otherwise.
Similarly, if $q<\rank_c(BWT(\overleftarrow{S}),|S|)$, then $f=\select_c(BWT(\overleftarrow{S}),q+1)$; otherwise, $f$ does not exist.

Following~\cite[Algorithm 1]{OkanoharaS08}, if $b$ exists, then insert $1+\min\{LCS[b+1], LCS[b+2], \dots, LCS[s]\}$ at position $\ell$ of $LCS(S)$; otherwise, insert 0 at position $\ell$.
If $\ell+1\le |Sc|$, then do the following. If $f$ exists, update the entry
at position $\ell+1$ of $LCS(Sc)$ to $1+\min\{LCS[s+1], LCS[s+2], \dots, LCS[f]\}$;
otherwise, update that entry to $0$.


By Lemma~\ref{lem-rlBWT-online}, the run-length BWT of $\overleftarrow{T[1..i]}$ can be maintained online in $O(r_i)$ words of working space. Moreover, the rank $\ell$ and the positions $s$, $b$, and $f$ can be computed by cumulative-count, rank, and select queries on the old $BWT(\overleftarrow{T[1..i-1]})$ in $O(\log r_{i-1})$ time. Since the BWT update from round $i-1$ to round $i$ changes the number of runs by at most a constant, this is $O(\log r_i)$ time.
Hence, both the BWT update and the computation of the insertion position for the new $PA$ entry take $O(\log r_i)$ time.

To compute the new LCS entries at position $\ell$ and $\ell+1$, range minimum queries over the old LCS array are required.
In Section~\ref{sect-ds-rmq}, we give the range-minimum-query data structure used in our algorithm, replacing the dynamic RMQ component that leads to the $O(\log^3 n)$-time bound of Okanohara and Sadakane~\cite{OkanoharaS08}.



\section{Online Data Structures for $\phi$ Queries and Access to $PA$ and $LCS$}
\label{sect-random-access}

In this section, we first introduce an online data structure that, at each
round $i$, maintains the prefix-array entries occurring at the run tops of
$BWT(\overleftarrow{T[1..i]})$, together with their $\phi$ values and their
$LCS$ values. We then explain how to use this data structure to support $\phi$ and
$LCS$ access at each round. 
The data-structure updates follow in the subsequent subsection.

\subsection{The data structures and queries}
\label{sect-ds-basic}

\phantomsection\label{def-ds-run-tops}
Consider any round $i \ge 2$. Let $t_1,\ldots,t_{r_i}$ denote the run tops of
$BWT(\overleftarrow{T[1..i]})$, where $r_i$ is the number of runs in
$BWT(\overleftarrow{T[1..i]})$.
We maintain the set $\{\, PA[t_x] : 1\le x\le r_i \,\}$ using a dynamic \emph{fusion tree}~\cite{DBLP:journals/jcss/FredmanW94, PatrascuT14}. For each key $PA[t_x]$, we store
$\phi(PA[t_x])$ and $LCS[t_x]$ as satellite data.
The fusion tree uses $O(r_i)$ words and supports insertions, deletions,
predecessor queries, and successor queries on this set in
$O(1+\log_w r_i)$ time.
By Proposition~\ref{prop-irreducible}, given any prefix-array entry $PA[j]$, we can
use the fusion tree to compute $LCS[j]$ and $\phi(PA[j])$ in
$O(1+\log_w r_i)$ time.

\phantomsection\label{def-block-decomposition}
Let $L$ be the block parameter to be decided later. 
Inspired by the work of Okanohara and Sadakane~\cite{OkanoharaS08},
at each round $i$ we conceptually partition the entries of $PA(T[1..i])$
into blocks of size between $L$ and $2L$. 
If $i\le L$, then all the entries are assigned to the same block.
We build a B-tree $\mathcal{T}$ with fan-out $F$ over these blocks, so that each leaf corresponds to one block. 
A B-tree~\cite{bayer1970organization} is a rooted ordered tree in which all leaves have the same depth, every non-root internal node has between $\lceil F/2\rceil$ and $F$ children, and the root has between $2$ and $F$ children unless it is the only node.
Hence, $\mathcal{T}$ has $O(1+i/L)$ leaves and height $O(1+\log_F(1+i/L))$.
The entries of a block are not stored explicitly at the corresponding leaf. Instead, each leaf stores the size of its block, called its weight, together with the last $PA$-entry of the block.
At each internal node $v$, we store, as its weight, the sum of the weights of the descendant leaves of $v$.

Since there are $\Theta(1+i/L)$ blocks, the B-tree contains
$m=\Theta(1+i/L)$ leaves and $O(m/F)$ internal nodes.
Each internal node has $O(F)$ child pointers of $w$ bits each.
Storing the tree requires $O(w\cdot (m+ F\frac{m}{F}))=O(w(1+i/L))$ bits.
The auxiliary data stored at each node costs $O(w)$ bits.
Hence, the total space cost of the data structures built over the B-tree is bounded by $O(w\cdot m)=O(w(1+i/L))$ bits.


\phantomsection\label{def-btree-navigation}
Given a rank $j\in[1..i]$, we start at the root of $\mathcal{T}$. At the current internal node, we scan its children from left to right and find the child such that one of its leaf descendant corresponding to the block that contains the entry $PA[j]$.
More precisely, let $s$ be the total weight of the children scanned before the chosen child, and let $z$ be the weight of the chosen child. We choose the unique child satisfying $s<j\le s+z$, and then continue the search recursively at this child with local rank $j-s$. When a leaf is reached, the corresponding block is exactly the block containing the entry $PA[j]$. 
At each level, we inspect at most $F$ children. Since the height of $\mathcal{T}$ is $O(1+\log_F(1+i/L))$, the block containing $PA[j]$ can be found in $O(F(1+\log_F(1+i/L)))$ time.

The data structures above allow us to enumerate the $PA$ entries and the
corresponding $LCS$ entries in any block. Given an arbitrary position $j\in [1..i]$, we first find the block that contains the entry $PA[j]$ as described earlier, taking $O(F(1+\log_F(1+i/L)))$ time.
Then, starting from
the last $PA$-entry stored for that block, we iteratively apply $\phi$
queries $k-1$ times, where $k$ is the size of the block. Since
$k=O(L)$ and each $\phi$ query is supported using the fusion tree in
$O(1+\log_w r_i)$ time, all $PA$ entries in the block can be enumerated in $O(F(1+\log_F(1+i/L))+L\cdot(1+\log_w r_i))$ time. During the same enumeration, we recover the corresponding $LCS$ value of each enumerated $PA$ entry using Proposition~\ref{prop-irreducible}.

Since each block has size at most $2L$, and each $\phi$ or $LCS$ query takes $O(1+\log_w r_i)$ time, all $PA$ entries in the block, together with their corresponding $LCS$ values, can be enumerated in $O(F(1+\log_F(1+i/L)) + L\cdot(1+\log_w r_i))$ time.
This proves Lemma~\ref{lem-block-PA-LCS}.

\begin{lemma}\label{lem-block-PA-LCS}
In each round $i$, one can maintain a data structure using
$O(w(r_i+i/L))$ bits of space such that, given a rank
$j\in[1..i]$, it can enumerate all $PA$ entries and the corresponding
$LCS$ values in the block containing $PA[j]$ in
$O(F(1+\log_F(1+i/L))+L\cdot(1+\log_w r_i))$
time. Consequently, retrieving $PA[j]$, $\phi(PA[j])$, and $LCS[j]$ for
an arbitrary rank $1\le j\le i$ takes the same time.
\end{lemma}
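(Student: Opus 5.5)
The proof assembles the two components described just before the statement: the fusion tree over irreducible positions and the weighted B-tree $\mathcal{T}$ over blocks. For space, the fusion tree stores one key per run top of $BWT(\overleftarrow{T[1..i]})$, with $\phi(PA[t_x])$ and $LCS[t_x]$ as satellite data. That is $O(r_i)$ words, i.e.\ $O(w r_i)$ bits. The B-tree has $m=\Theta(1+i/L)$ leaves and $O(m/F)$ internal nodes, each with $O(F)$ pointers and $O(1)$ words of auxiliary data (a weight, and for leaves the last $PA$-entry of the block). So it takes $O(w(1+i/L))$ bits, and the total is $O(w(r_i+i/L))$ bits, since $r_i\ge 1$ absorbs the additive constant.

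For query time, I would argue in three steps. First, locate the leaf. Descend from the root, scanning at most $F$ children per node and maintaining the prefix sum $s$ of weights, until reaching the unique child with $s<j\le s+z$. This costs $O(F)$ per level over $O(1+\log_F(1+i/L))$ levels, and it returns the block containing $PA[j]$, the local offset of $j$ inside it, and the block's last $PA$-entry $p$.

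Second, enumerate the block. Starting from $p$, apply $\phi$ repeatedly $k-1\le 2L-1$ times. Correctness rests on the identity $\phi(PA[j'])=PA[j'-1]$, so each step yields the colexicographically preceding prefix-array entry, and after $k-1$ steps all entries of the block have been listed in reverse order. Each step needs the successor $p_h$ of the current value among the irreducible positions, obtained by one successor query on the fusion tree in $O(1+\log_w r_i)$ time. Proposition~\ref{prop-irreducible} then yields $\phi(PA[j'])=\phi(p_h)-(p_h-PA[j'])$ and $LCS[j']=PLCS[p_h]-(p_h-PA[j'])$ in $O(1)$ extra time, with $PLCS[p_h]$ stored as the satellite $LCS$ value of key $p_h$. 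The total is $O(L(1+\log_w r_i))$.

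Third, specialize to a single rank. With the offset of $j$ inside the block known, $PA[j]$, $LCS[j]$ and $\phi(PA[j])$ are read off during or just after the enumeration, where the last of these costs one more fusion-tree query.

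The main point needing care is that the invariants of the maintained structures actually hold at round $i$. The fusion tree must contain exactly the $PA$-values at the current run tops with correct satellite data, and each leaf must store the true last entry of its block. One corner case is the first block: the enumeration reaches $PA[1]$, where $\phi=0$, and Proposition~\ref{prop-irreducible} for $\phi$ applies only for $j>1$. The enumeration never needs $\phi$ beyond the block's first entry, so this is harmless. The lemma asserts only the existence of a structure with these bounds at each round, and its online maintenance is deferred to the subsequent subsection. So I would state the invariants explicitly and rely on Proposition~\ref{prop-irreducible} for correctness given them.
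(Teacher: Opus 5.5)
Your proposal is correct and follows the paper's argument essentially step for step. It uses the same space accounting for the fusion tree over run-top $PA$-values and the weighted B-tree, the same weight-guided descent in $O(F(1+\log_F(1+i/L)))$ time, and the same enumeration of a block by $k-1$ successive $\phi$ steps from the stored last entry, each resolved by one fusion-tree successor query plus Proposition~\ref{prop-irreducible} to recover both $\phi$ and $LCS$; your explicit treatment of the invariants (whose maintenance the paper also defers to the update subsection) and of the $\phi(PA[1])=0$ corner case is consistent with, and slightly more careful than, the paper's presentation.
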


For readability, throughout the rest of the paper we write $H_i := 1+\log_F(1+i/L)$ and $\BlockTime{i} := F\cdot H_i + L\cdot(1+\log_w r_i)$. Thus, the B-tree has height $O(H_i)$ in round $i$, while $O(\BlockTime{i})$ is the cost of locating a relevant block and enumerating its $O(L)$ entries using $\phi$ queries. In particular, by Lemma~\ref{lem-block-PA-LCS}, block enumeration and random access to $PA[j]$, $\phi(PA[j])$, and $LCS[j]$ take $O(\BlockTime{i})$ time.

%

\subsection{Updating the B-tree and the fusion tree}
\label{sect-update-B-tree-Fusion-Tree}

In this section, we detail the updates in each round (specifically, from round $i-1$ to round $i$) on the maintained data structures, including the B-tree and the dynamic fusion tree.

\subsubsection{Updating the B-tree.}
\label{sect-update-balanced}

 After the insertion of the $PA$-entry, if the block that contains the inserted $PA$-entry does not overflow, that is, its size is at most $2L$, we update the weight of the leaf corresponding to the block by increasing its original weight by one and, if necessary, the last $PA$-entry stored for this block.
The weight of every ancestor of this leaf increases by one, and we update these weights along the root-to-leaf path.
This involves $O(\BTHeight{i})$ nodes and takes $O(\BTHeight{i})$ time.

If the block overflows, then we split it into two blocks of sizes $L$ and $L+1$, respectively. 
Before performing the split, we first
enumerate the $PA$-entries in the old block using
Lemma~\ref{lem-block-PA-LCS}, which takes $O(\BlockTime{i})$ time. 
We then insert the newly added $PA$-entry into this enumerated sequence
at its proper position. When this
sequence is split into two blocks of size $\Theta(L)$, we can identify
and store the last $PA$-entry of each resulting block.

The split of the block also causes its corresponding leaf to split into two leaves.
We replace the old leaf corresponding to the old block by the two new leaves, so both new leaves share the same common ancestors.
We also update the auxiliary data of each new leaf, including its weight and the last $PA$-entry in its corresponding block, taking constant time.

If the common parent $v$ of both leaves still has at most $F$ children after the insertion, then no further node split is required. 
We update the weight of each ancestor as before, which takes $O(\BTHeight{i})$ time.

Otherwise, $v$ has $F+1$ children and overflows. In this case, we split $v$ into two consecutive nodes, each containing $\Theta(F)$ children, and compute the weights of both new nodes, taking $\Theta(F)$ time.
This split may in turn cause the parent of both new nodes to overflow, and the same splitting procedure is then applied recursively towards the root. If the root overflows, it is split and a new root is created.
Thus, the update costs $O(F)$ time at each level, for a total of $O(F\cdot\BTHeight{i})$ time. 

The preceding analysis covers both the non-overflow and overflow cases and immediately yields Lemma~\ref{lem-update-B-tree}.

\begin{lemma}\label{lem-update-B-tree}
In each round $i$, updating the B-tree and the auxiliary data stored at its nodes requires
$O(\BlockTime{i})$
time in the worst case.
\end{lemma}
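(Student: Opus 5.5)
The plan is to split the round-$i$ update into two cases, matching the preceding discussion. In each case we sum the costs of locating the block, possibly enumerating it, and repairing the root-to-leaf path, and then check that the total is $O(\BlockTime{i})$, where $\BlockTime{i}=F\cdot\BTHeight{i}+L(1+\log_w r_i)$.

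First, the new $PA$ entry $i$ is inserted at co-lexicographic rank $\ell$; by Lemma~\ref{lem-rlBWT-online}, $\ell$ is available in $O(\log r_i)$ time. We locate the leaf whose block receives rank $\ell$ by descending $\mathcal{T}$ with the weight-based navigation, which costs $O(F\cdot\BTHeight{i})$.

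\emph{Non-overflow case.} We increment the weights along the path, which costs $O(\BTHeight{i})$. If $\ell$ becomes the last rank of its block, we set the stored last $PA$-entry of that leaf to $i$. Otherwise that entry is unchanged, because inserting a new prefix does not alter the existing prefix-array entries, only their ranks. The cost of this case is $O(F\cdot\BTHeight{i})$.

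\emph{Overflow case.} We enumerate the old block via Lemma~\ref{lem-block-PA-LCS} in $O(\BlockTime{i})$ time, splice in $i$, and split the sequence into two pieces of sizes $L$ and $L+1$, recording the last entry of each. A subtle point is that the enumeration uses $\phi$ and the fusion tree. I plan to perform it on the structures of round $i-1$, before the fusion tree is updated, starting from the old stored last entry; $r_{i-1}=r_i\pm O(1)$ makes the cost bound unchanged. Then the leaf is replaced by two leaves, and node splits may cascade toward the root. Each split costs $O(F)$ to redistribute the children and recompute the weights, and the weights of the ancestors above the highest split are incremented. Over $O(\BTHeight{i})$ levels this is $O(F\cdot\BTHeight{i})$.

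Adding the costs, both cases give $O(F\cdot\BTHeight{i}+L(1+\log_w r_i))=O(\BlockTime{i})$. The height stays $O(\BTHeight{i})$, since a root split only adds one level consistent with the $B$-tree invariants.

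The main obstacle is justifying the enumeration in the overflow case, namely that the $\phi$-based enumeration is valid at that moment. The concern is that inserting $Sc$ changes the $\phi$ values of the neighbours of rank $\ell$ and may change the run tops. The fix is to enumerate with the old fusion tree and insert $i$ by hand at rank $\ell$. This is correct because the relative co-lexicographic order of the old prefixes is preserved.
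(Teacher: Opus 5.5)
Your proof is correct and follows essentially the same argument as the paper. In the non-overflow case you do weight (and possibly last-entry) updates along the root-to-leaf path. In the overflow case you enumerate the old block via Lemma~\ref{lem-block-PA-LCS}, splice in $i$, split into blocks of sizes $L$ and $L+1$, and cascade $O(F)$-cost node splits, for a total of $O(F\cdot\BTHeight{i}+\BlockTime{i})=O(\BlockTime{i})$. Your explicit choice to enumerate with the round-$(i-1)$ fusion tree before it is modified makes precise an ordering the paper leaves implicit here; the paper states that convention only in its fusion-tree and RMQ update descriptions.
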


\subsubsection{Updating the dynamic fusion tree.}
\label{sect-fusion-tree-update}

As shown in the preliminaries, each round modifies the BWT by
replacing the unique occurrence of $\$$ by the newly read character $T[i]$ and by
inserting a new occurrence of $\$$. These two operations may change the set of
run tops. We now describe how to update the dynamic fusion tree accordingly.

\phantomsection\label{def-fusion-update-notation}
Let $B_{old}=BWT(\overleftarrow{T[1..i-1]})$
and let $PA_{old}=PA(T[1..i-1])$ and $LCS_{old}=LCS(T[1..i-1])$.
Let $c=T[i]$, let $s$ be the position of the unique occurrence of $\$$ in
$B_{old}$, and let $\ell$ be the co-lexicographic rank of $T[1..i]$ among all prefixes of
$T[1..i]$. By Lemma~\ref{lem-rlBWT-online}, both the position $s$ and the rank $\ell$ can be computed in
$O(\log r_{i})$ time. 


Before modifying the maintained structures, we retrieve all old prefix-array
entries that may be needed during the update. In particular, we retrieve
$PA_{old}[s]$, $PA_{old}[s+1]$ if $s<i-1$, $PA_{old}[\ell]$ if
$\ell<i$, and $PA_{old}[i-1]$ if $\ell=i$. 
By Lemma~\ref{lem-block-PA-LCS}, these $O(1)$ entries can be retrieved in $O(\BlockTime{i})$ time.
We also compute the value $p$ that will become $\phi(i)$ after inserting the new prefix: set $p \leftarrow 0$ if $\ell=1$, $p \leftarrow \phi(PA_{\mathrm{old}}[\ell])$ if $1<\ell<i$, and $p \leftarrow PA_{\mathrm{old}}[i-1]$ if $\ell=i$.  Using the old dynamic fusion tree and Lemma~\ref{lem-block-PA-LCS}, the value $p$ can be computed in $O(\BlockTime{i})$ time.


Next, we compute the two LCS values affected by the insertion. 
For this purpose, we reuse the notation introduced immediately after Lemma~\ref{lem-rlBWT-online}.
Let $b<s$ be
the largest position in $B_{old}$ containing the symbol $c$, if such a
position exists, and let $f>s$ be the smallest position in $B_{old}$ containing
the symbol $c$, if such a position exists. As shown there, these positions can be found using
rank and select queries on $B_{old}$ in $O(\log r_{i})$ time by Lemma~\ref{lem-rlBWT-online}.

\phantomsection\label{def-fusion-update-lcs-values}
Define $x=1+\min LCS_{old}[b+1..s]$ if $b$ exists and $x=0$ otherwise; similarly, define $y=1+\min LCS_{old}[s+1..f]$ if $f$ exists and $y=0$ otherwise.
The value $x$ is the LCS value inserted at position $\ell$, and $y$ is the
new LCS value at position $\ell+1$ if $\ell<i$. 
Let $Q_{\mathrm{RMQ}}(i)$ denote the worst-case time to answer a
range minimum query over $LCS(T[1..i])$ using the data structure maintained
at the end of round $i$.
As $x$ and $y$ are defined by range minimum queries over $LCS_{\mathrm{old}}=LCS(T[1..i-1])$, they can be computed in $O(Q_{\mathrm{RMQ}}(i-1))$ time.

We now update the dynamic fusion tree. First consider the replacement of
$B_{old}[s]=\$$ by $c$. In the old BWT, position $s$ is a run top, and
position $s+1$ is also a run top if $s<i-1$. After replacing $\$$ by $c$,
some of these run tops may disappear. If $s>1$ and $B_{old}[s-1]=c$, then
position $s$ is no longer a run top, so we delete the key $PA_{old}[s]$ from
the dynamic fusion tree. Similarly, if $s<i-1$ and $B_{old}[s+1]=c$, then
position $s+1$ is no longer a run top, so we delete the key $PA_{old}[s+1]$ from the dynamic fusion tree.
Whenever a key is deleted from the fusion tree, its satellite data is deleted as well.

We then insert the new occurrence of $\$$ at position $\ell$. Since $\$$ is
unique and lexicographically smaller than all other symbols, this creates a
new run top at position $\ell$. The corresponding prefix-array entry is the
new value $i$. Hence, we insert the key $i$ into the dynamic fusion tree, with
satellite data $\phi(i)=p$ and $LCS[\ell]=x$.
If $\ell<i$, then the old entry $PA_{old}[\ell]$ is shifted to position
$\ell+1$. Since the symbol at position $\ell$ is now the unique $\$$, position
$\ell+1$ is also a run top. Therefore, we insert or update the key
$PA_{old}[\ell]$ in the dynamic fusion tree, setting its satellite data to $\phi(PA_{old}[\ell])=i$ and $LCS[\ell+1]=y$.

After these operations, the dynamic fusion tree stores exactly the
prefix-array entries corresponding to the run tops of
$BWT(\overleftarrow{T[1..i]})$, together with their updated $\phi$ and $LCS$
values. The correctness follows from the fact that replacing the old $\$$ by
$c$ can only affect the run-top status of positions $s$ and $s+1$, while
inserting the new $\$$ at position $\ell$ creates run tops at positions
$\ell$ and $\ell+1$ (if $\ell<i$).
Each insertion, deletion, or satellite-data update in the dynamic fusion tree
takes $O(1+\log_w \max\{r_i,r_{i-1}\})$ time. Since only $O(1)$ such operations are performed and
$r_i=\Theta(r_{i-1})$, the total time spent on fusion tree updates is $O(1+\log_w r_i)$.


Altogether, the update of the dynamic fusion tree from round $i-1$ to round $i$ takes $O(\BlockTime{i}+Q_{\mathrm{RMQ}}(i-1)+\log r_i)$ time.
This proves Lemma~\ref{lem-update-fusion}.


\begin{lemma}\label{lem-update-fusion}
For every round \(i>1\), the dynamic fusion tree can be updated in
\(O(\BlockTime{i}+Q_{\mathrm{RMQ}}(i-1)+\log r_i)\) time if an RMQ over \(LCS(T[1..i-1])\) takes \(Q_{\mathrm{RMQ}}(i-1)\) time.
\end{lemma}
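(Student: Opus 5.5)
The statement (Lemma~\ref{lem-update-fusion}) is really a summary of the procedure just described, so I would prove it in two parts: first that the procedure leaves the fusion tree in the right state, then that its cost is as claimed.

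\emph{Correctness.} The invariant to restore is that the keys are exactly the $PA$ entries at run tops of $BWT(\overleftarrow{T[1..i]})$, each carrying its $\phi$ value and $LCS$ value as satellite data.

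The first step is to show that only positions $s$, $s+1$, $\ell$, and $\ell+1$ can change run-top status. Replacing $B_{old}[s]$ changes the comparisons only between positions $s-1,s$ and $s,s+1$. Inserting $\$$ at $\ell$ shifts everything after it by one and creates the new comparisons around $\ell$. Because $\$$ is unique, both $\ell$ and $\ell+1$ are run tops. Run-top status is relative to neighbours, so it is preserved for all other shifted positions. The one thing to check is that the old neighbours of the insertion point, $B[\ell-1]$ and $B_{old}[\ell]$, lose their adjacency. That can only turn $\ell+1$ (formerly $\ell$) into a run top, which is handled explicitly.

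The second step is to show that satellite data of keys not touched by the procedure stay valid. Their $PA$ values do not change, and $LCS$ changes only at positions $\ell$ and $\ell+1$ by the update rule of Okanohara and Sadakane. Likewise $\phi$ changes only for the new prefix $i$ and for its colexicographic successor $PA_{old}[\ell]$. So the only values to verify are $p$, $x$, $i$, and $y$, which the procedure computes directly.

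\emph{Cost.} The cost accounting has four pieces:
\begin{itemize}
\item computing $s$, $\ell$, $b$, and $f$ takes $O(\log r_i)$ time by Lemma~\ref{lem-rlBWT-online};
\item retrieving the $O(1)$ old $PA$ entries and $p$ takes $O(\BlockTime{i})$ time, by Lemma~\ref{lem-block-PA-LCS} applied to the old structures;
\item computing $x$ and $y$ takes two RMQs over $LCS_{old}$, i.e.\ $O(Q_{\mathrm{RMQ}}(i-1))$ time;
\item there are $O(1)$ fusion-tree operations, each costing $O(1+\log_w r_i)$ time.
\end{itemize}
For the last item I use $r_{i-1}$ versus $r_i$: one replacement plus one insertion changes the number of runs by at most a constant. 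The $O(1+\log_w r_i)$ term is then absorbed into $\BlockTime{i}$. Summing the four pieces gives $O(\BlockTime{i}+Q_{\mathrm{RMQ}}(i-1)+\log r_i)$.

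The main obstacle is the correctness claim for the unchanged satellite data, specifically $\phi$ and $LCS$ at positions away from $\ell$. I would handle it by appealing directly to the standard update rules: adjacency in colexicographic order changes only around the inserted prefix.
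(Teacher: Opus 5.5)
Your proposal is correct and follows essentially the same route as the paper. Run-top status can change only at $s$ and $s+1$ (which can only lose it when $\$$ is replaced by $c$) and at $\ell$ and $\ell+1$ (which become run tops because the new $\$$ is unique); only the keys $i$ and $PA_{old}[\ell]$ need new satellite data, namely $(p,x)$ and $(i,y)$; and the cost splits into the same $O(\log r_i)$, $O(\BlockTime{i})$, $O(Q_{\mathrm{RMQ}}(i-1))$ and $O(1+\log_w r_i)$ pieces. Two details are left implicit, in your proposal and in the paper alike: when $\ell\in\{s,s+1\}$, a key deleted in the replacement step is reinserted as $PA_{old}[\ell]$ (hence ``insert or update''), and $\BlockTime{i-1}=O(\BlockTime{i})$ holds because $r_{i-1}$ and $r_i$ differ by at most a constant.
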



In the next section, we present a range-minimum-query solution and bound $Q_{\mathrm{RMQ}}(i-1)$.

\section{Faster Range Minimum Queries over LCS and Prefix Arrays}
\label{sect-ds-rmq}

For range minimum queries over the LCS and prefix arrays, we reuse the B-tree from Section~\ref{sect-ds-basic}, whose leaves represent prefix-array blocks.


\subsection{The data structure}
Recall that we conceptually divide \(PA(T[1..i])\) into blocks of size $\Theta(L)$, and that each leaf of the B-tree $\mathcal{T}$
corresponds to one such block. We divide \(LCS(T[1..i])\) in exactly the
same way: the entries of \(LCS(T[1..i])\) are partitioned into blocks
according to the same block boundaries as the entries of \(PA(T[1..i])\).
Thus, each leaf of $\mathcal{T}$ corresponds both to a block
of \(PA(T[1..i])\) and to the corresponding block of \(LCS(T[1..i])\).

We augment $\mathcal{T}$ with range-minimum information. For
each leaf, we store the minimum LCS and PA values in the corresponding LCS and PA blocks, respectively.
For each internal node, we store the minimum of the values stored at its
children; equivalently, this is the minimum LCS (resp. PA) value among all LCS (resp. PA)
entries in the blocks covered by the subtree rooted at that node. Each
stored minimum uses \(O(w)\) bits. Since the tree has
\(O(1+i/L)\) nodes, this additional aggregate information uses
\(O(w\cdot(1+i/L))\) bits of space.

\subsection{The algorithm for range minimum queries}
We show the algorithm for range minimum queries over the LCS array, while the one over the PA array can be implemented similarly.

Consider a range minimum query over \(LCS(T[1..i])\) with query range
\([e,f]\). We first find the blocks \(b_e\) and \(b_f\) containing
positions \(e\) and \(f\), respectively, by descending $\mathcal{T}$ using the weight information stored at each node. If \(b_e=b_f\), then the
whole query range lies inside one block. In this case, we apply
Lemma~\ref{lem-block-PA-LCS} to enumerate the LCS values in this block
and return the minimum value among the positions \(e,\ldots,f\).

It remains to consider the case \(b_e<b_f\). We decompose the query range
into three parts: \([e,\mathrm{end}(b_e)]\), the complete blocks
\(b_e+1,\ldots,b_f-1\), and \([\mathrm{start}(b_f),f]\).
The first and third parts are contained in the two boundary blocks. We
compute their minima by applying Lemma~\ref{lem-block-PA-LCS} to enumerate
the LCS values in the corresponding boundary blocks and by taking the
minimum over the relevant positions only.

For the middle part, consisting of the complete blocks
$b_e+1,\ldots,b_f-1$, we use the aggregate values stored in the
B-tree $\mathcal{T}$. If $b_e+1>b_f-1$, then the middle part is empty.
Otherwise, let $\lambda_e$ and $\lambda_f$ be the leaves representing
the boundary blocks $b_e$ and $b_f$, respectively, and let $x$ be
their lowest common ancestor in $\mathcal{T}$.

We first traverse the path from $\lambda_e$ upward toward $x$, stopping at the child of $x$ on this path. For every visited node $v$, let $u$ be the parent of $v$. We add to a set $\mathcal{L}$ all children of $u$ that appear after $v$ in the left-to-right order and whose subtrees are strictly before $\lambda_f$.
Symmetrically, we traverse the path from $\lambda_f$ upward toward $x$, stopping at the child of $x$ on this path. For every visited node $v$, let $u$ be the parent of $v$. We add to a set $\mathcal{R}$ all children of $u$ that appear before $v$ in the left-to-right order and whose subtrees are strictly after $\lambda_e$.

The nodes in $\mathcal{L}\cup\mathcal{R}$ form the standard
\emph{canonical decomposition} of the leaf interval strictly between
$\lambda_e$ and $\lambda_f$, as used in \emph{segment trees} and \emph{range trees}~\cite{BergCKO08}.
An illustration is provided in Figure~\ref{fig:B-tree}.

In particular, they root pairwise disjoint subtrees, and the union of the blocks covered by these subtrees is exactly the block interval $b_e+1,\ldots,b_f-1$. 
Since the B-tree $\mathcal{T}$ has height
$\BTHeight{i}$, and at most $O(F)$ nodes are added at each level, we have $|\mathcal{L}\cup\mathcal{R}|=O(F\cdot\BTHeight{i})$. 
Therefore, by taking the minimum of the aggregate LCS minima stored at the nodes in
$\mathcal{L}\cup\mathcal{R}$, we obtain the minimum LCS value in the middle part.

\begin{figure}[t]
    \centering
    \includegraphics[width=1\textwidth]{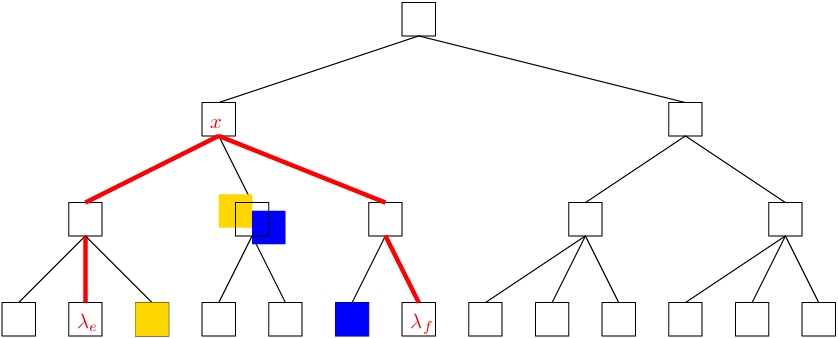}
    \caption{Decomposition of the complete blocks between the boundary blocks $b_e$ and $b_f$. The leaves $\lambda_e$ and $\lambda_f$ represent $b_e$ and $b_f$, respectively, and $x$ is their lowest common ancestor. During the traversal from $\lambda_e$ toward $x$, the yellow nodes are added to $\mathcal{L}$: each is a right sibling of a node on the traversal path, and its subtree lies strictly before $\lambda_f$. Symmetrically, the blue node is collected during the traversal from $\lambda_f$ toward $x$. The subtrees rooted at the selected nodes cover exactly the complete blocks $b_e+1,\ldots,b_f-1$.}
    \label{fig:B-tree}
\end{figure}

Finally, we return the minimum of the values obtained from the left
boundary block, the middle complete blocks, and the right boundary block.

\subparagraph*{Time complexity analysis.}
The two boundary blocks \(b_e\) and
\(b_f\) can be found by root-to-leaf searches in the B-tree,
taking $O(F\cdot\BTHeight{i})$ time. By Lemma~\ref{lem-block-PA-LCS}, the two
boundary minima can be computed in
$O(\BlockTime{i})$ time.
For the middle part, we first find the lowest common ancestor of the
leaves \(\lambda_e\) and \(\lambda_f\). We then traverse the two paths
from these leaves toward the lowest common ancestor $x$ and collect
$O(F\cdot\BTHeight{i})$ nodes from the tree, as described above. For each
collected node, the aggregate minimum stored at that node can be accessed
in constant time. Hence, the middle part costs $O(F\cdot\BTHeight{i})$ time.
Therefore, the total query time
is bounded by $O(F\cdot\BTHeight{i}+\BlockTime{i})\subseteq O(\BlockTime{i})$.


\subsection{Updating the range-minimum-query data structure}
We describe how to maintain the aggregate minimum values from round
$i-1$ to round $i$.
Let $\ell$ be the position at which the new entries are inserted into the
$LCS$ and $PA$ arrays. Recall from the preliminaries that the
update to the $LCS$ array consists of inserting a new value $x$ at position
$\ell$ and, if $\ell<i$, updating the entry shifted to position $\ell+1$
to a new value $y$. The update to the $PA$ array simply inserts the new
value $i$ at position $\ell$.

Let $B$ be the old block into which the new entries are inserted; if
$\ell=i$, then $B$ is the last block. Before modifying $B$, we enumerate
all $PA$-entries and the corresponding $LCS$-values in $B$, using
Lemma~\ref{lem-block-PA-LCS}. This takes
$O(\BlockTime{i})$ time. We then insert the
new $PA$-entry $i$ and the new $LCS$-value $x$ into the enumerated
sequences at the appropriate position. If $\ell<i$, we also replace the
$LCS$-value of the shifted entry at position $\ell+1$ by $y$. Thus, after
this step, we know all $PA$-entries and $LCS$-values in the affected block
after the insertion and update.

If the affected block does not overflow, we recompute the minimum
$PA$-value and the minimum $LCS$-value of this block directly from the
updated sequences. We then update the aggregate minimum values along the
root-to-leaf path of the corresponding leaf, using the same procedure as
for updating aggregate information in Section~\ref{sect-update-balanced}.
This costs $O(F\cdot\BTHeight{i})$ additional time.

If the affected block overflows, then it is split into two blocks of size
$\Theta(L)$, as described in Section~\ref{sect-update-balanced}. Since the
updated sequences are already known explicitly, we can compute the minimum
$PA$-value and the minimum $LCS$-value of each resulting block directly in
$O(L)$ time. We then update the B-tree $\mathcal{T}$, including the aggregate
minimum values and, if necessary, the tree structure, using the same
overflow-handling procedure as in Section~\ref{sect-update-balanced}. This
costs $O(F\cdot\BTHeight{i})$ additional time.

Therefore, the total time for updating the RMQ aggregate information in
round $i$ is
$O(F\cdot\BTHeight{i}+\BlockTime{i})\subseteq O(\BlockTime{i})$. 

Lemma~\ref{lem-RMQ} summarizes the space and time complexity of the range-minimum-query data structure. 

\begin{lemma}\label{lem-RMQ}
In each round \(i\), by augmenting the data structure of
Lemma~\ref{lem-block-PA-LCS} with \(O(w(1+i/L))\) bits, one can support range minimum queries over
\(LCS(T[1..i])\) and \(PA(T[1..i])\) in \(O(\BlockTime{i})\) time.
Given the insertion position \(\ell\) and the new affected \(LCS\) values
\(x\) and \(y\), the RMQ aggregate information can also be updated in
\(O(\BlockTime{i})\) time.
\end{lemma}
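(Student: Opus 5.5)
The plan is to assemble Lemma~\ref{lem-RMQ} from three pieces: the space accounting for the extra aggregates, the three-part range decomposition for queries, and the leaf-to-root refresh for updates. Throughout, Lemma~\ref{lem-block-PA-LCS} is the only tool used to touch individual entries.

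\emph{Space.} Augment every node of the B-tree $\mathcal{T}$ with two words: the minimum $LCS$ value and the minimum $PA$ value over the blocks in its subtree. Since $\mathcal{T}$ has $O(1+i/L)$ leaves and internal nodes, this adds $O(w(1+i/L))$ bits on top of the structure of Lemma~\ref{lem-block-PA-LCS}.

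\emph{Query.} Given a range $[e,f]$, locate the blocks $b_e$ and $b_f$ by weighted root-to-leaf descents, each costing $O(F\cdot\BTHeight{i})$.
\begin{itemize}
\item The partial ranges inside $b_e$ and $b_f$ (or the whole range, if $b_e=b_f$) are handled by enumerating these blocks with Lemma~\ref{lem-block-PA-LCS}, in $O(\BlockTime{i})$ time.
\item For the complete blocks strictly between $b_e$ and $b_f$, climb from both leaves to their lowest common ancestor. Collect the siblings lying strictly between the two paths.
\end{itemize}
The key claim is that the collected nodes form a canonical decomposition. Their subtrees are pairwise disjoint and their leaves are exactly the blocks $b_e+1,\ldots,b_f-1$. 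This follows because every leaf strictly between $\lambda_e$ and $\lambda_f$ has a unique highest ancestor that avoids both paths, and that ancestor is a collected sibling.

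At most $F$ siblings are collected per level, so the middle minimum costs $O(F\cdot\BTHeight{i})$. The total query time is $O(F\cdot\BTHeight{i}+\BlockTime{i})=O(\BlockTime{i})$, since $\BlockTime{i}\ge F\cdot \BTHeight{i}$ by definition. The argument for $PA$ is identical.

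\emph{Update.} Before changing anything, enumerate the affected block $B$ via Lemma~\ref{lem-block-PA-LCS}, while $\phi$ and $LCS$ still refer to the old structures. Splice in the new $PA$ entry $i$ and $LCS$ value $x$ at position $\ell$, and overwrite the shifted entry with $y$. Then:
\begin{itemize}
\item Recompute the block minima in $O(L)$ time, or the minima of both halves if $B$ splits as in Section~\ref{sect-update-balanced}.
\item Recompute the aggregates at each ancestor from its $O(F)$ children, in $O(F\cdot\BTHeight{i})$ time total.
\item When an internal node splits, recompute the aggregates of both halves in $O(F)$ time. This piggybacks on the splitting cost already bounded in Lemma~\ref{lem-update-B-tree}.
\end{itemize}

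The main subtlety, rather than a real obstacle, is the order of operations. The block contents must be enumerated before the fusion tree is modified, since Lemma~\ref{lem-update-fusion} changes the $\phi$/$LCS$ samples and enumeration relies on them.

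A second point is that the update to position $\ell+1$ may fall into the next block when $\ell$ is a block boundary. In that case that block is enumerated too, and its path refreshed, which only doubles the cost. Both points keep the update within $O(\BlockTime{i})$.
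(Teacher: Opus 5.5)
Your proposal is correct and follows essentially the same route as the paper: per-node $LCS$/$PA$ minima stored in the B-tree. Queries enumerate the two boundary blocks via Lemma~\ref{lem-block-PA-LCS} and cover the middle blocks with a canonical decomposition, while updates enumerate the affected block before modification, splice in the new values, and refresh the aggregates along the leaf-to-root path, including splits. Your extra care about position $\ell+1$ possibly falling into the next block is harmless but not needed, since the paper implicitly places the new entry in the block $B$ containing the shifted entry, so both affected $LCS$ values lie in $B$.
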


\subsection{The next/previous smaller value queries}\phantomsection\label{def-nsv-psv}
Let $LCS_i$ denote $LCS(T[1..i])$. Given a position $j\in [1..i]$ and a
value $\mathit{val}\in [1..n]$, we define $NSV_i(j,\mathit{val})$ as the
smallest index $j'\in [j..i]$ such that $LCS_i[j']<\mathit{val}$, if such
an index exists; otherwise, the query returns $\bot$. Symmetrically,
$PSV_i(j,\mathit{val})$ returns the largest index $j'\in [1..j]$ such that
$LCS_i[j']<\mathit{val}$, or $\bot$ if no such index exists. In
Lemma~\ref{lem-nsv}, we show how to support these two queries using the
data structure introduced above.

\begin{lemma}\label{lem-nsv}
Given a position $j\in[1..i]$ and a value $val\in[1..n]$, the
data structure for range minimum queries over $LCS(T[1..i])$ supports
$NSV_i(j,val)$ and $PSV_i(j,val)$ in
$O(\BlockTime{i})$ time.
\end{lemma}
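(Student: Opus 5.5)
We prove the statement for $NSV_i(j,\mathit{val})$; the case of $PSV_i(j,\mathit{val})$ is symmetric, with left and right exchanged. The approach follows the three-part decomposition used for range minimum queries: first search the block containing position $j$, then use the aggregate minima stored in the B-tree $\mathcal{T}$ to locate the first later block that contains a small enough value, and finally search inside that block.

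\emph{Step 1: the block of $j$.} Descend $\mathcal{T}$ by weights to find the block $b_j$ containing position $j$; this costs $O(F\cdot\BTHeight{i})$ time. Enumerate the $LCS$ values of $b_j$ using Lemma~\ref{lem-block-PA-LCS} in $O(\BlockTime{i})$ time. Scan positions $j,\ldots,\mathrm{end}(b_j)$ and return the first index $j'$ with $LCS_i[j']<\mathit{val}$, if one exists.

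\emph{Step 2: the first qualifying block.} Otherwise, we need the leftmost leaf strictly after $\lambda_{b_j}$ whose stored block minimum is below $\mathit{val}$.
\begin{itemize}
\item Walk upward from $\lambda_{b_j}$. At each ancestor $u$, reached from its child $v$, scan the children of $u$ to the right of $v$ in left-to-right order.
\item Stop at the first such child $z$ whose aggregate $LCS$ minimum is less than $\mathit{val}$.
\item If the root is reached without finding such a $z$, return $\bot$.
\end{itemize}
Otherwise, descend from $z$. At each node, move to the leftmost child whose aggregate minimum is less than $\mathit{val}$; such a child exists because the minimum of the node is the minimum over its children. This ends at a leaf $\lambda$ whose block contains an entry smaller than $\mathit{val}$.

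\emph{Step 3: the answer inside that block.} Enumerate the block of $\lambda$ via Lemma~\ref{lem-block-PA-LCS} and return the first position whose value is less than $\mathit{val}$. The global rank of that position is obtained by summing the weights of all subtrees lying to its left along the root-to-leaf path, which the descent already accumulates.

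The delicate point is the correctness of Step 2, namely that the returned block is the first block after $b_j$ containing a value below $\mathit{val}$. During the upward walk, at each level we scan only right siblings of the current path node, in left-to-right order. The subtrees skipped at lower levels lie entirely to the left of those examined at higher levels, and every skipped subtree has aggregate minimum at least $\mathit{val}$. Hence all blocks strictly between $b_j$ and the left boundary of $z$ contain no qualifying entry. The downward descent picks the leftmost qualifying child at each level, so it reaches the leftmost qualifying leaf inside $z$.

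For the running time, the upward and downward walks each visit $O(\BTHeight{i})$ levels and inspect $O(F)$ children per level, for $O(F\cdot\BTHeight{i})$ in total. Steps 1 and 3 enumerate two blocks at $O(\BlockTime{i})$ each. The total is therefore $O(F\cdot\BTHeight{i}+\BlockTime{i})\subseteq O(\BlockTime{i})$, as claimed.
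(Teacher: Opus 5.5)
Your proposal is correct and follows essentially the same approach as the paper's proof: scan the block of $j$, walk upward inspecting right siblings for an aggregate minimum below $\mathit{val}$, descend along the leftmost qualifying children, and enumerate the final block, for $O(F\cdot\BTHeight{i}+\BlockTime{i})$ total. Your additional remark on recovering the global rank from accumulated subtree weights fills in a detail the paper leaves implicit. Those weights come from the left siblings along the leaf's root-to-leaf path, so they are gathered during the initial descent, the upward walk, and the descent from $z$, not the descent alone.
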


\begin{proof}
We describe the query algorithm for $NSV_i(j,\mathit{val})$; the algorithm
for $PSV_i(j,\mathit{val})$ is symmetric.

Let $B_1$ be the block containing position $j$, and let $\lambda_1$ be the
leaf representing $B_1$ in the B-tree. We first enumerate the
$LCS_i$-values in $B_1$, starting from position $j$ and continuing to the
end of $B_1$. If during this enumeration we find an entry smaller than
$\mathit{val}$, then the first such entry is exactly
$NSV_i(j,\mathit{val})$, and we return its position.

Suppose that no such entry is found in $B_1$. We then traverse the path, in the B-tree,
from $\lambda_1$ upward toward the root. For each visited node $u$, we
inspect the right siblings of $u$, in left-to-right order, and stop as soon
as we find a sibling whose aggregate minimum is smaller than
$\mathit{val}$. If no such sibling is found during the whole upward
traversal, then no entry smaller than $\mathit{val}$ occurs to the right of
$B_1$, and the query returns $\bot$.

Otherwise, let $v$ be the first sibling found in this way. We descend from
$v$ to a leaf as follows. At each internal node, we inspect its children
from left to right and move to the first child whose aggregate minimum is smaller than $\mathit{val}$. Since the aggregate minimum of the current subtree is always smaller than $\mathit{val}$, such a child always exists.
This descending traversal ends at the leftmost block $B_2$ to the right of
$B_1$ whose minimum $LCS_i$-value is smaller than $\mathit{val}$.

Finally, we enumerate the $LCS_i$-values in $B_2$ from the beginning of the
block and return the first position whose value is smaller than
$\mathit{val}$. By the choice of $B_2$, this position is precisely
$NSV_i(j,\mathit{val})$.

It remains to analyze the running time. The block $B_1$ can be found by a
root-to-leaf traversal in the B-tree, taking
$O(F\cdot \BTHeight{i})$ time. The enumerations of the blocks $B_1$ and
$B_2$ are supported by Lemma~\ref{lem-block-PA-LCS}, and together take
$O(\BlockTime{i})$ time. The upward traversal
inspects at most $O(F)$ siblings per level, and the downward traversal
inspects at most $O(F)$ children per level. Since the B-tree has height
$\BTHeight{i}$, these traversals take
$O(F\cdot \BTHeight{i})$ time in total. Hence the total time for
$NSV_i(j,\mathit{val})$ is
$O(F\cdot \BTHeight{i}+\BlockTime{i})\subseteq O(\BlockTime{i})$. The same bound holds for
$PSV_i(j,\mathit{val})$ by symmetry.
\end{proof}

\subsection{Finding the $PA$-interval of any substring of $T[1..i]$}\phantomsection\label{def-pa-interval-computation}
In Lemma~\ref{lem-pa-interval}, we show how to compute the $PA$-interval of a substring of $T[1..i]$ and how to find its leftmost occurrence.

\begin{lemma}\label{lem-pa-interval}
In each round $i$, one can maintain a data structure using
$O(w(r_i+i/L))$ bits of space such that, given an interval
$[j_1..j_2]\subseteq [1..i]$ and the value $PA^{-1}[h]$ for any ending position
$h$ with $T[h-(j_2-j_1)..h]=T[j_1..j_2]$, one can compute the
$PA$-interval of $T[j_1..j_2]$ and find its leftmost occurrence in
$T[1..i]$ in $O(\BlockTime{i})$ time.
\end{lemma}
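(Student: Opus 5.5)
Let $m=j_2-j_1+1$ and $q=PA^{-1}[h]$. The plan is to find the maximal interval of $LCS$ values around $q$ that are at least $m$, and then take a PA range minimum over it. Since the prefixes $T[1..PA[k]]$ are sorted co-lexicographically, the prefixes ending with the suffix $T[j_1..j_2]$ form a contiguous interval $[a..b]$ of ranks containing $q$. Because $LCS[k]$ is the length of the longest common suffix of the adjacent prefixes at ranks $k-1$ and $k$, the standard characterization applies: $[a..b]$ is the maximal interval containing $q$ such that $LCS[k]\ge m$ for every $k\in[a+1..b]$. (The assumption that $T[h-(j_2-j_1)..h]=T[j_1..j_2]$ guarantees $q$ lies in the interval.)

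Both endpoints come from the queries of Lemma~\ref{lem-nsv} with $\mathit{val}=m$. For the right endpoint, compute $j'=NSV_i(q+1,m)$ (if $q<i$). Set $b=j'-1$ if $j'\neq\bot$, and $b=i$ otherwise; if $q=i$, then $b=i$. For the left endpoint, compute $j''=PSV_i(q,m)$. Set $a=j''$ if $j''\ne\bot$, and $a=1$ otherwise. The index $j''$ is included in the interval because $LCS[j'']<m$ only says that rank $j''$ differs from rank $j''-1$, so rank $j''$ itself still belongs. Note that $LCS[q]$ concerns ranks $q-1$ and $q$, so $PSV$ must start at $q$ while $NSV$ starts at $q+1$. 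I expect this off-by-one bookkeeping to be the main point requiring care. I would check it against the convention $LCS[1]=0$, which is always smaller than $m\ge1$ and so makes $a=1$ automatic as a fallback.

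For the leftmost occurrence, I would use that the ending positions of the occurrences of $T[j_1..j_2]$ in $T[1..i]$ are exactly $PA[a..b]$. The leftmost occurrence therefore ends at $\min PA[a..b]$ and starts at that value minus $m-1$. This minimum is a range minimum query over $PA(T[1..i])$, which Lemma~\ref{lem-RMQ} supports in $O(\BlockTime{i})$ time. The query also returns the value and not merely the rank: enumerating the boundary blocks yields actual $PA$ entries, and the aggregates store actual minima.

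The query makes two $NSV/PSV$ calls and one RMQ, each taking $O(\BlockTime{i})$, for $O(\BlockTime{i})$ time in total. For space, the structures are those of Lemma~\ref{lem-block-PA-LCS} together with the augmentation of Lemma~\ref{lem-RMQ}. This is $O(w(r_i+i/L))+O(w(1+i/L))=O(w(r_i+i/L))$ bits, using $r_i\ge1$.
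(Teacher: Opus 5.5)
Your proposal is correct and matches the paper's proof essentially step for step. Like the paper, you characterize the $PA$-interval as the maximal range of ranks around $q=PA^{-1}[h]$ with $LCS\ge m$ and take its endpoints as $PSV_i(q,m)$ (defaulting to $1$) and $NSV_i(q+1,m)-1$ (defaulting to $i$, and equal to $i$ when $q=i$), then obtain the leftmost occurrence from a range minimum over $PA$ on that interval via Lemma~\ref{lem-RMQ}; your remarks on the off-by-one conventions and on $LCS[1]=0$ making the $PSV$ fallback vacuous are accurate.
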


\begin{proof}
Let $X=T[j_1..j_2]$.
By the definition of $h$, $X$ is a suffix of the prefix $T[1..h]$.
Let $[e..f]$ denote the $PA$-interval of $X$, and let
$q=PA^{-1}[h]$.
Then, it follows that $q\in [e..f]$.

Let $m=|X|$. 
Observe that the $PA$-interval of $X$ is the maximal interval
containing $q$ such that all adjacent $LCS$-values inside the interval are at least $m$.
Hence, if $PSV_i(q,m)=\bot$, then $e=1$; otherwise, $e=PSV_i(q,m)$. 
Symmetrically, if
$q=i$, then $f=i$. Otherwise, if $NSV_i(q+1,m)=\bot$, then $f=i$; and
otherwise, $f=NSV_i(q+1,m)-1$. Therefore, by Lemma~\ref{lem-nsv}, the
interval $[e..f]$ can be computed in $O(\BlockTime{i})$ time.

All occurrences of $X$ in $T[1..i]$ correspond exactly to the prefixes of $T[1..i]$
whose right endpoints are stored in $PA[e..f]$. Thus the leftmost occurrence of
$X$ is obtained by taking the minimum $PA$-value in this range. By
Lemma~\ref{lem-RMQ}, we can compute
$j_{\min}=\min\{PA[k]:k\in[e..f]\}$ in
$O(\BlockTime{i})$ time. Since the occurrence of $X$ ending
at position $j_{\min}$ starts at position $j_{\min}-m+1$, this gives the
leftmost occurrence of $X$ in $T[1..i]$.

The total time is $O(\BlockTime{i})$, and the claimed space
bound follows from the maintained data structures of
Lemmas~\ref{lem-block-PA-LCS}, \ref{lem-RMQ}, and~\ref{lem-nsv}.
\end{proof}

\section{Applications}
\label{sect-app}

In this section, we present two applications of the data structures developed in the previous sections: the online computation of the value $LRS[i]$ and the online maintenance of smallest suffixient sets. Both applications rely on the following components:
\begin{itemize}
    \item Lemma~\ref{lem-rlBWT-online}, which gives the online representation of the run-length BWT;
    \item Lemmas~\ref{lem-block-PA-LCS} and~\ref{lem-update-fusion}, which support the enumeration of $PA$- and $LCS$-entries within an arbitrary block and the online update of the sampled $PA$- and $LCS$-information, respectively; and
    \item Lemmas~\ref{lem-RMQ} and~\ref{lem-nsv}, which provide range minimum queries over the $LCS$ and $PA$ arrays, as well as $NSV$ and $PSV$ queries over the $LCS$ array.
\end{itemize}

In particular, Lemma~\ref{lem-RMQ} supplies the range minimum queries required by Lemma~\ref{lem-update-fusion}. Therefore, after substituting the bound of Lemma~\ref{lem-RMQ} into Lemma~\ref{lem-update-fusion}, the maintained data structures can be updated from round $i-1$ to round $i$ in $O(\log r_i+\BlockTime{i})$ time. The total space usage of the maintained data structures is $O(w(r_i+i/L))$ bits in round $i$.

\subsection{Maintaining the LRS online}
\label{def-online-lpfa}

We explain how to compute $LRS[i]$ in each round $i$, after updating the maintained data structures for $T[1..i]$. As shown above, these updates take $O(\log r_i+\BlockTime{i})$ time.
Following the idea of Okanohara and Sadakane~\cite{OkanoharaS08}, we compute $LRS[i]$ by comparing the longest common suffixes of $T[1..i]$ with its predecessor and successor in co-lexicographic order. Our contribution is to support the required accesses using the compressed data structures developed in the preceding sections, while guaranteeing worst-case bounds.

If $i\le 2$, we
set $LRS[i]\leftarrow 0$. Otherwise, let $\ell$ denote the position of the
newly inserted entry in $LCS(T[1..i])$, or equivalently, the co-lexicographic
rank of the new prefix $T[1..i]$ among all prefixes of $T[1..i]$.
If $\ell=i$, the new prefix has no successor in the co-lexicographic
order, and we set $LRS[i]\leftarrow LCS(T[1..i])[\ell]$. Otherwise, we set
$LRS[i]\leftarrow \max\{LCS(T[1..i])[\ell],LCS(T[1..i])[\ell+1]\}$. By
Lemma~\ref{lem-block-PA-LCS}, the entries $LCS(T[1..i])[\ell]$ and, if
$\ell<i$, $LCS(T[1..i])[\ell+1]$ can be accessed in
$O(\BlockTime{i})$ time.

The correctness follows from the fact that, in a co-lexicographically sorted set of strings, the longest common suffix between a fixed string and
any other string in the set is attained by one of its adjacent strings in
the sorted order. 
In our setting, the fixed string is the newly inserted prefix $T[1..i]$. Every other stored prefix is of the form $T[1..j]$ for
some $j<i$, and therefore any common suffix of $T[1..i]$ and $T[1..j]$ corresponds to a suffix of $T[1..i]$ that also occurs ending at an earlier
position $j$. Conversely, every previous occurrence of a suffix ending at $i$ is represented by such a prefix $T[1..j]$ with $j<i$. Thus $LRS[i]$ is exactly the maximum longest common suffix between $T[1..i]$ and any other stored prefix. Since $T[1..i]$ has rank $\ell$, its only possible adjacent
prefixes are the prefixes at ranks $\ell-1$ and $\ell+1$, whenever they exist. The corresponding longest common suffix values are stored in
$LCS(T[1..i])[\ell]$ and $LCS(T[1..i])[\ell+1]$, respectively. 
This establishes Theorem~\ref{theorem-LPFA}.


\begin{theorem}\label{theorem-LPFA}
In each round $i$, one can maintain a data structure using
$O(w(r_i+i/L))$ bits of space that computes $LRS[i]$ in
$O(\log r_i+\BlockTime{i})$ time.
\end{theorem}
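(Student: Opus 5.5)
The plan is to assemble the components already built, in the order an actual round is processed, and add up their space and time bounds. In round $i$ the maintained structures are: the dynamic run-length BWT of Lemma~\ref{lem-rlBWT-online}, which takes $O(r_i)$ words, i.e.\ $O(wr_i)$ bits; the dynamic fusion tree over run-top $PA$ entries with their $\phi$ and $LCS$ satellites, also $O(wr_i)$ bits; and the B-tree over blocks of Lemma~\ref{lem-block-PA-LCS}, augmented with the aggregate minima of Lemma~\ref{lem-RMQ}, taking $O(w(1+i/L))$ bits. Summing gives $O(w(r_i+i/L))$ bits, since $i\ge 1$ absorbs the constant term into $i/L$ up to constants (or into $r_i\ge 1$). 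We do not need to store the output $LRS$ array, because it is reported online.

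For time, the update from round $i-1$ to round $i$ proceeds as follows.
\begin{itemize}
\item Compute $s$, $\ell$, $b$, $f$ by rank/select on the old run-length BWT in $O(\log r_i)$ time, and perform the BWT update in the same bound (Lemma~\ref{lem-rlBWT-online}).
\item Compute $x$ and $y$ by two RMQs over $LCS(T[1..i-1])$. By Lemma~\ref{lem-RMQ}, $Q_{\mathrm{RMQ}}(i-1)=O(\BlockTime{i-1})\subseteq O(\BlockTime{i})$, because $H_i$ is monotone in $i$ and $r_{i-1}=\Theta(r_i)$ up to the additive constant from Proposition~\ref{pro-bound-on-r}.
\item Update the fusion tree (Lemma~\ref{lem-update-fusion}), the B-tree (Lemma~\ref{lem-update-B-tree}), and the RMQ aggregates (Lemma~\ref{lem-RMQ}), each within $O(\log r_i+\BlockTime{i})$.
\end{itemize}
One point needs care here: all reads of old entries, namely $PA_{old}[s]$, $PA_{old}[s+1]$, $PA_{old}[\ell]$, $\phi$, $x$ and $y$, must happen before any structure is modified. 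The paper's ordering in Section~\ref{sect-fusion-tree-update} already guarantees this.

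Finally, compute $LRS[i]$ by reading $LCS(T[1..i])[\ell]$ and, if $\ell<i$, $LCS(T[1..i])[\ell+1]$ via Lemma~\ref{lem-block-PA-LCS} in $O(\BlockTime{i})$ time, and return their maximum (returning $0$ when $i\le 2$). Correctness follows from the adjacency property of co-lexicographically sorted prefixes: the longest common suffix of $T[1..i]$ with any earlier prefix $T[1..j]$, $j<i$, is attained by a neighbour of rank $\ell$. Every such common suffix is precisely a suffix of $T[1..i]$ that also ends at an earlier position $j$, which is exactly the definition of $LRS[i]$. The sentinel prefix $T[1..1]=\$$ only contributes LCS value $0$, so it causes no spurious match.

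The main obstacle is checking that the $LCS$ values read after the update are correct. Everything hinges on $x$ and $y$ being exactly the new adjacent LCS values, which is the Okanohara--Sadakane rule, and on the fusion tree being consistent after the update, so that Proposition~\ref{prop-irreducible} applies to the new string. I would verify the latter by re-examining that only positions $s$, $s+1$, $\ell$, $\ell+1$ change run-top status, and that the reducibility chain of every other irreducible position is unaffected. Once this invariant holds, the bound $O(\log r_i+\BlockTime{i})$ is just the sum of the steps above.
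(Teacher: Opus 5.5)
Your proposal is correct and follows essentially the same route as the paper: you maintain the run-length BWT, the run-top fusion tree, and the RMQ-augmented B-tree, and update them in $O(\log r_i+\BlockTime{i})$ time by substituting Lemma~\ref{lem-RMQ} into Lemma~\ref{lem-update-fusion}. You then output $\max\{LCS[\ell],LCS[\ell+1]\}$ (or $LCS[\ell]$ when $\ell=i$), justified by co-lexicographic adjacency. Your extra remarks, that old entries are read before any modification and that $\BlockTime{i-1}\in O(\BlockTime{i})$ because $r_{i-1}\le r_i+2$, agree with the paper and make some of its implicit steps explicit.
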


Note that our online algorithm reports only the current value $LRS[i]$ in round $i$; it does not store the previously computed $LRS$ values. In the next section, we present an application for which access to the current value $LRS[i]$ alone is sufficient.

By choosing appropriate values of the block size $L$ and the fan-out $F$,
we obtain the following trade-offs as shown in Corollary~\ref{cor-LPFA}. 

\begin{restatable}{corollary}{lrsTradeoffs}\label{cor-LPFA}
In each round \(i\), computing \(LRS[i]\) admits either trade-off below:
\begin{itemize}
\item[(a)] $O(r_i\log n+i)$ bits of space and $O(\log^2 n/\log\log n)$ time; and
\item[(b)] $O(r_i\log n+i\log\log n)$ bits of space and $O((\log n/\log\log n)^2)$ time.
\end{itemize}
\end{restatable}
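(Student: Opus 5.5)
The plan is to instantiate Theorem~\ref{theorem-LPFA} with concrete values of $L$ and $F$. Its space bound is $O(w(r_i+i/L))$ bits. Its time bound is $O(\log r_i+\BlockTime{i})$, with $\BlockTime{i}=F\cdot H_i+L(1+\log_w r_i)$ and $H_i=1+\log_F(1+i/L)$. Throughout we use $w=\Theta(\log n)$, $r_i\le i+1\le n+1$ (any $r_i\le n+2$ suffices), and $i\le n$. Hence $\log r_i=O(\log n)$ and $1+\log_w r_i=O(\log n/\log\log n)$. As stated in the preliminaries, $n$ is known from the first round, so $L$ and $F$ may depend on $n$.

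For trade-off (a), take $L=\Theta(\log n)$ and $F=\Theta(\log n/\log\log n)$, or any $F$ with $F\cdot H_i=O(\log^2 n/\log\log n)$. The space is then $O(w r_i+w\,i/L)=O(r_i\log n+i)$ bits. For the time, the term $L(1+\log_w r_i)$ is $O(\log n\cdot\log n/\log\log n)=O(\log^2 n/\log\log n)$. The height satisfies $H_i=O(1+\log n/\log F)$, and with $F=\Theta(\log n/\log\log n)$ we have $\log F=\Theta(\log\log n)$. So $H_i=O(\log n/\log\log n)$ and $F\cdot H_i=O(\log^2 n/(\log\log n)^2)$, which is within the target. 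The additive $\log r_i=O(\log n)$ is also within the target. Even the simple choice $F=\Theta(1)$ would give $F H_i=O(\log n)$, which also fits.

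For trade-off (b), take $L=\Theta(\log n/\log\log n)$. The space becomes $O(r_i\log n+i\log n\cdot\log\log n/\log n)=O(r_i\log n+i\log\log n)$ bits. The block term is $L(1+\log_w r_i)=O((\log n/\log\log n)^2)$. Again choose $F=\Theta(\log n/\log\log n)$, or a constant, so that $F\cdot H_i=O(\log^2 n/(\log\log n)^2)$, respectively $O(\log n)$. The remaining term $\log r_i=O(\log n)$ is dominated by $(\log n/\log\log n)^2$.

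No step is genuinely hard. The only point needing care is confirming that the $F\cdot H_i$ term and the additive $\log r_i$ do not exceed the block-enumeration cost in case (b). This holds because $\log n=o((\log n/\log\log n)^2)$, and it is why I would fix $F$ at most $\Theta(\log n/\log\log n)$ rather than something larger. I would also remark that small rounds, where $i\le L$ gives a single block, are covered by the same bounds, since $H_i=O(1)$ there.
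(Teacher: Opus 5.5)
Your proposal is correct and follows the paper's proof: both instantiate Theorem~\ref{theorem-LPFA} with $L=\Theta(\log n)$ for (a) and $L=\Theta(\log n/\log\log n)$ for (b), and both use $w=\Theta(\log n)$ to bound $\log_w r_i$ and the space term $w\,i/L$. The only difference is the fan-out: the paper takes $F=4$ in (a) and $F=\log^{\epsilon} n$ with $0<\epsilon<1$ in (b), while your choices ($F=\Theta(\log n/\log\log n)$, or a constant) also keep $F\cdot H_i$ within the target, and your remark that a constant $F$ already suffices for (b) is valid.
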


\begin{proof}
Recall that $w=\Theta(\log n)$ and $r_i\le n$. Hence
$\log_w r_i=O(\log n/\log\log n)$ and $\log r_i=O(\log n)$.
By Theorem~\ref{theorem-LPFA}, the running time is $O(\BlockTime{i}+\log r_i)=O(\log r_i+FH_i+L\cdot(1+\log_w r_i))$, where $H_i=1+\log_F(1+i/L)$.

For the first trade-off, set $F=4$ and $L=\Theta(\log n)$. The space bound
of Theorem~\ref{theorem-LPFA} becomes
$O(w(r_i+i/L))=O(r_i\log n+i)$ bits. The running time becomes
$O(\log r_i+1+\log(1+i/L)+L\cdot(1+\log_w r_i))$. Since $i\le n$ and
$L=\Theta(\log n)$, we have $\log r_i+1+\log(1+i/L)=O(\log n)$. Moreover,
$L\cdot(1+\log_w r_i)=O(\log n\cdot(1+\log n/\log\log n))
=O(\log^2 n/\log\log n)$. Thus the total time is
$O(\log^2 n/\log\log n)$.

For the second trade-off, set $L=\Theta(\log n/\log\log n)$ and
$F=\log^\epsilon n$, where $0<\epsilon<1$ is a fixed constant. The space
bound becomes $O(w(r_i+i/L))=O(r_i\log n+i\log\log n)$ bits. The term
$L\cdot(1+\log_w r_i)$ is
$O((\log n/\log\log n)\cdot(1+\log n/\log\log n))
=O((\log n/\log\log n)^2)$. Since
$\log_F(1+i/L)=O(\log n/\log\log n)$, we also have
$F(1+\log_F(1+i/L))=O(\log^\epsilon n\cdot \log n/\log\log n)
=O((\log n/\log\log n)^2)$, where the last bound uses $\epsilon<1$. Finally,
the additive term $\log r_i=O(\log n)$ is dominated by
$O((\log n/\log\log n)^2)$. Therefore, the total time is
$O((\log n/\log\log n)^2)$.
\end{proof}




By Proposition~\ref{pro-bound-on-r} and Corollary~\ref{cor-LPFA}, the two trade-offs use $O(r\log n+n)$ bits and $O(r\log n+n\log \log n)$ bits of working space, respectively, throughout the execution of the online algorithm, where $r$ is the number of runs in $BWT(\overleftarrow{T[1..n]})$.
Before concluding this subsection, we establish a lower bound on the working space required for online LRS computation.

\begin{theorem}\label{theorem-lower-bound-LRS} Let \(n\ge 4\). Any deterministic online algorithm that correctly computes \(LRS[j]\) in round \(j\), for every \(j\in[1..n]\) and every text \(T[1..n]\) over \(\{\$, \#, 0, 1\}\) where \(\$\) occurs only at position \(1\), requires \(\Omega(n)\) bits of peak working space in the worst case. \end{theorem}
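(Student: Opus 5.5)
We plan a fooling-set (pigeonhole) argument on the algorithm's memory state at a single round. Set $m=\lfloor (n-2)/2\rfloor$, or a similar constant fraction of $n$, and consider the $2^{m}$ texts whose prefix is $T[1..m+1]=\$\,x$ for a binary string $x\in\{0,1\}^m$. A deterministic online algorithm must then process the suffix that follows. Its entire future behaviour on any continuation, and in particular every $LRS$ value it outputs afterwards, is determined by its memory configuration after round $m+1$ together with the remaining input. If the working space at that point is at most $s$ bits, there are at most $2^{s}$ configurations. So we will show that distinct $x\neq x'$ must lead to distinct configurations, which forces $s\ge m=\Omega(n)$.

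To separate $x$ from $x'$, we construct a continuation that reveals a single bit of $x$ through an LRS value. Let $k$ be a position where $x[k]\neq x'[k]$. We append $\#$ followed by a suitable string $z$ over $\{0,1\}$, of length at most $m$ so that the total length stays within $n$ (padding with further $\#$'s if needed to reach exactly $n$). The symbol $\#$ does not occur in $x$. Hence any repeated suffix ending after the $\#$ either stays within the appended part or matches a substring of $x$.

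The cleanest choice we would try first is $z=x[1..k-1]\cdot 0$. After reading $\#x[1..k-1]$, the suffix $x[1..k-1]$ already occurs in $\$x$ and is preceded there by $\$$, not by $\#$; so the longest repeating suffix has length at least $k-1$. When we then read the final $0$, the answer depends on whether $x[1..k-1]0$ occurs in $\$x\#x[1..k-1]$. We compare the $LRS$ value at that round on $x$ with the value on $x'$. These two texts share the same continuation but differ at bit $k$.

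The main obstacle is that $x[1..k-1]0$ may also occur elsewhere in $x$ (or in $x'$), so that the two LRS values could coincide. To avoid this, we restrict the family to strings whose occurrences are forced to be unique. For example, we encode each bit $b_t$ as a self-delimiting codeword, such as $0^{t}1$-style blocks, or use $\#$-separated indexed blocks, so that each prefix $x[1..k]$ occurs only at the start. Alternatively, we query by appending $z$ equal to $x[1..k-1]$ followed by $0$, relying on the fact that $\$$ prefixes the only occurrence of $x[1..k-1]$ as a left context. The cleanest fix is to restrict to $x$ with an encoding such that $LRS$ equals $k$ exactly when $x[k]=0$. The family must still have $2^{\Omega(n)}$ members, which gives $\Omega(n)$ bits. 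We would check that the chosen encoding (e.g.\ bit $b$ written as $1b$ preceded by a marker that makes prefixes unique) has constant-factor length blowup and maps distinct $x$ to texts distinguishable by one continuation. Then pigeonhole completes the proof: two strings with the same state would produce identical outputs on the distinguishing continuation, contradicting correctness.
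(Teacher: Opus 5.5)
Your fooling-set framework (one memory state per prefix $\$x$, identical outputs on any common continuation) is the same as the paper's. What is missing is the step that carries the whole proof: a continuation on which the two executions must output different $LRS$ values. You never establish one.

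\textbf{The bit probe fails.} Take $x=00$ and $x'=10$. The only mismatch is at $k=1$, so the probe is $\#0$. Both $\$00\#0$ and $\$10\#0$ produce the $LRS$ values $0$ and then $1$. In general, with $k$ the first mismatch and $x[k]=0$, the probe $\#x[1..k-1]0$ separates the pair exactly when $x[1..k]$ does not occur in $x'$, and nothing prevents it from occurring there. Your remark about $\$$ being a left context of $x[1..k-1]$ does not help, because the obstruction lies entirely inside $x'$.

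\textbf{The repairs are not carried out, and two of them would fail.} Blocks whose length grows with the bit index, or blocks carrying explicit indices, inflate the length by a non-constant factor. That leaves only $\Omega(\sqrt n)$ or $\Omega(n/\log n)$ bits. Blocks separated by $\#$ break the fact that $\#$ does not occur in $x$, which your analysis of the probe relies on.

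\textbf{The missing idea.} You need not probe a single bit; append all of $x$. In round $2m+2$, the text $\$x\#x$ has $LRS[2m+2]\ge m$. In $\$x'\#x$, every length-$m$ window ending at or before position $2m+1$ is either $x'$ itself or contains $\$$ or $\#$, so $LRS[2m+2]<m$. This is the paper's argument. It needs no encoding or restriction of the family, and it yields $m=i-1$ bits whenever $n\ge 2i$.

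If you prefer to keep the bit probe, a concrete code such as $C(x)=00\,1x[1]\,1x[2]\cdots 1x[m]$ works. The factor $00$ occurs only at the start of $C(x)$. So at the first mismatch $k\ge 4$ of $C(x)$ and $C(x')$, the probe $\#C(x)[1..k]$ gives $LRS\ge k$ on $x$ and $LRS<k$ on $x'$, at a constant-factor cost in $m$. This verification is exactly what your proposal leaves open.
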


\begin{proof}
To this end, we prove that for every \(i\ge 2\) and \(n\ge 2i\), there exists an input of length \(n\) on which the algorithm requires \(\Omega(i)\) bits of working space after round \(i\).

Fix integers \(i\ge 2\) and \(n\ge 2i\), and let \(m=i-1\).
Let $X, Y\in \{0, 1\}^m$ denote any two distinct binary strings. 
Consider two instances that, respectively, read $\$\cdot X$ and $\$\cdot Y$ as the input text in the first $m+1$ rounds.
We prove that the memory states of both instances must be different upon completion of round $m+1$.

Assume, for the sake of contradiction, that the memory states of both instances are the same, after completing round $i$.
Now continue both executions with the same sequence $\#\cdot X$, followed by the same arbitrary string \(Z\in\{0,1\}^{n-2i}\). Since the algorithm is deterministic and the two executions have the same memory state before reading $\#\cdot X$, they must produce the same output in every subsequent round, in particular in round $2m+2=2i$.

However, in the first instance, the input prefix read by round \(2m+2\) is \(T_1[1..2m+2]=\$ X\#X\). The value of $LRS[2m+2]$ is at least $m$, since the suffix $X$ of length $m$ appears in the prefix $T_1[1..m+1]$.
In the second instance, the corresponding input prefix is \(T_2[1..2m+2]=\$Y\#X\), and the value of $LRS[2m+2]$ must be smaller than $m$, since every length-$m$ substring of $T_2[1..2m+1]$ is either $Y$ or a substring containing $\$$ or $\#$, each of which is different from $X$.
Both LRS values must be different in round $2m+2$, a contradiction.

Since \(X\) and \(Y\) are arbitrary distinct binary strings, the algorithm has at least \(2^m\) possible memory states after round \(m+1=i\), and hence requires at least \(m=i-1=\Omega(i)\) bits in the worst case. Finally, choosing \(i=\lfloor n/2\rfloor\) gives \(\Omega(n)\) bits of peak working space.
\end{proof}

\subsection{Online construction of smallest suffixient sets}
\label{sect-online-consctruct-sss}

\phantomsection\label{def-sss-records}



%
For every round \(i\), let \(T_i=T[1..i]\) and \(R_i=\overleftarrow{T_i}\). Our goal is to maintain the length-annotated rightmost SSS, denoted by \(\widehat{{RSSS}}(R_i)\). For algorithmic convenience, we instead maintain its \(T_i\)-coordinate representation $	\mathcal{A}_i=\{\langle i-q+1,\Delta\rangle : \langle q,\Delta\rangle \in\widehat{{RSSS}}(R_i)\}$. Thus, a record \(\langle p,\Delta\rangle\in\mathcal{A}_i\) represents the occurrence \(T_i[p..p+\Delta-1]\), whose reversal is the selected rightmost occurrence in \(R_i\).
Our result is summarized as Theorem~\ref{theorem-SSS}. 

\begin{theorem}\label{theorem-SSS}
In each round $i$, $\Aset{i-1}$ can be updated to $\Aset{i}$ in
$O(\log r_i+\BlockTime{i})$ time using $O(w(r_i+i/L))$ bits of space.
Moreover, the positions in the smallest suffixient set for
$\overleftarrow{T[1..i]}$ can be enumerated in
$O(|\Aset{i}|(1+\log_w r_i))$ time.
\end{theorem}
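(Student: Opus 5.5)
The plan is to use the earlier machinery to find, in each round, the $O(1)$ records of $\Aset{i-1}$ that change, and to store $\Aset{i}$ in a dynamic fusion tree. A first observation is that $T_i$-coordinates are chosen on purpose. Going from round $i-1$ to round $i$ prepends $c=T[i]$ to $R_{i-1}$, so every new occurrence of a string in $R_i$ is its leftmost occurrence there. Hence the rightmost occurrence in $R_i$ of any string already present in $R_{i-1}$ is unchanged. In $T$-coordinates that rightmost occurrence is the leftmost occurrence in $T_i$, and its starting position $p$ does not shift when $T$ grows. So every record $\langle p,\Delta\rangle$ whose SRE survives from round $i-1$ to round $i$ stays valid verbatim. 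The work is therefore to decide which SREs appear and which disappear.

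\textbf{Step 1 (new substrings).} First update all structures and compute $\lambda=LRS[i]$ by Theorem~\ref{theorem-LPFA}, in $O(\log r_i+\BlockTime{i})$ time. Let $v=T[i-\lambda+1..i]$ and $a=T[i-\lambda]$. The only new substrings of $R_i$ are prefixes of $R_i$ of length larger than $\lambda$. So the only new right extension whose base $\overleftarrow{v}$ is right-maximal is $(\overleftarrow{v},a)$. It is automatically supermaximal, since its unique occurrence starts at the first position of $R_i$ and thus has no left context. Insert the record $\langle i-\lambda,\lambda+1\rangle$.

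\textbf{Step 2 (is $v$ newly left-maximal?).} An earlier occurrence of $v$ is witnessed by the $PA$-neighbour of rank $\ell\pm1$ whose $LCS$ value equals $\lambda$. This gives a rank $q=PA^{-1}[h]$ for an old ending position $h$. Apply Lemma~\ref{lem-pa-interval} to get the $PA$-interval $[e..f]$ of $v$ in $T_{i-1}$ together with its leftmost occurrence. Use $\rank$ on the BWT (Lemma~\ref{lem-rlBWT-online}) to test whether all preceding characters in $[e..f]$ equal a single symbol $b$.
\begin{itemize}
\item If they do, $\overleftarrow{v}$ has just become right-maximal. Then $(\overleftarrow{v},b)$ becomes a right extension. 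Add the record of its leftmost occurrence, namely $\langle j_{\min}-\lambda,\lambda+1\rangle$, with $j_{\min}$ from Lemma~\ref{lem-pa-interval}.
\item Before adding it, check that $(\overleftarrow{v},b)$ is supermaximal. This amounts to checking that $bv$ is not itself left-maximal, which is one more interval computation by Lemma~\ref{lem-pa-interval} plus BWT rank tests.
\end{itemize}

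\textbf{Step 3 (deletions).} This is the step I expect to be the main obstacle. An old SRE $(w,a')$ ceases to be supermaximal only if a new occurrence creates a right extension $(xw,a')$. By Step~1 this can only happen for $xw=\overleftarrow{v}$, and $a'$ must be $a$ or $b$. So the candidate victims are $(\overleftarrow{v}[2..],a)$ and $(\overleftarrow{v}[2..],b)$. I would first try to show that each candidate is decided by one $PSV$/$NSV$ test on $LCS$ (Lemma~\ref{lem-nsv}) at threshold $\lambda-1$ or $\lambda$ around rank $\ell$, combined with BWT rank checks. To delete, key a second dynamic fusion tree on the pair $(\Delta,\text{first symbol})$, or equivalently on the rank of the leftmost occurrence, so that a victim's record is located in $O(1+\log_w r_i)$ time.

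\textbf{Step 4 (space and time).} Since $|\Aset{i}|=\chi\le 2r_i$, both fusion trees take $O(w\,r_i)$ bits. Together with Lemmas~\ref{lem-block-PA-LCS} and~\ref{lem-RMQ} this gives $O(w(r_i+i/L))$ bits in total. Only $O(1)$ interval, rank, and fusion-tree operations are performed per round, which gives $O(\log r_i+\BlockTime{i})$ time. Enumerating the positions means an in-order traversal of the fusion tree holding $\Aset{i}$, mapping each $p$ to $i-p+1$ in $R_i$; with a successor query per key this costs $O(|\Aset{i}|(1+\log_w r_i))$.
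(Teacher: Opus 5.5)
Your decomposition is the paper's. Step~1 is its insertion of $\langle i-1-|P|,|P|+2\rangle$, with $v=P\cdot c$ and $a=y$. Step~2 is its unique-extension case with $b=z$. Your victims $(\overleftarrow{v}[2..],a)$ and $(\overleftarrow{v}[2..],b)$ are its $(Q,y)$ and $(Q,z)$.

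\textbf{The gap: Step~3.} You never give a working way to locate the records to delete.
A fusion tree keyed on $(\Delta,\text{first symbol})$ is not injective, because many SREs $(u,a')$ with $|u|=\lambda-1$ share both values. The $PA$-rank of the leftmost occurrence is not an equivalent key. It also cannot be maintained, since every round shifts all ranks $\ge\ell$ by one. Your PSV/NSV ``decision'' test is unproven, and it is unnecessary: a present candidate must always be deleted, because $(\overleftarrow{v},a')$ is a right extension of $R_i$.

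The missing idea follows from your own opening observation that stored occurrences are leftmost and never move. Key the single fusion tree by $\langle p,\Delta\rangle$, which is injective since it determines the string $T[p..p+\Delta-1]$. Then the record of $(\overleftarrow{v}[2..],a')$, if present, is exactly $\langle j_{a'},\lambda\rangle$, where $j_{a'}$ is the leftmost start of $a'\cdot v[1..\lambda-1]$. Lemma~\ref{lem-pa-interval} computes $j_{a'}$ from one row of the $PA$-interval of that string. You obtain that row by one $FL$-step ($\select_c$ on the BWT) from a row of $v$ preceded by $a'$. For $a'=a$, start from $\ell=PA^{-1}[i]$, which gives $PA^{-1}[i-1]$. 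For $a'=b$, start from the row $f-1$ or $e+1$ next to $\ell\in\{e,f\}$. Then delete if present, all within $O(\log r_i+\BlockTime{i})$ time.

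\textbf{Step~2 needs two repairs.} First, the run-length BWT of $\overleftarrow{T}$ stores $T[PA[k]+1]$, the symbol \emph{following} each prefix. So $\rank$ over $[e..f]$ says nothing about the symbols \emph{preceding} the occurrences of $v$. Use Lemma~\ref{lem-RMQ} instead. If $e<\ell<f$, then $\overleftarrow{v}$ is already right-maximal. Otherwise it is right-maximal iff some $LCS$ value between two consecutive old rows of $[e..f]$ equals $\lambda$.

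Second, your supermaximality test for $(\overleftarrow{v},b)$ uses the wrong criterion, and it would drop valid insertions. Take $T_i=\$\mathtt{xgtygtat}$, so $v=\mathtt{t}$, $a=\mathtt{a}$, and $b=\mathtt{g}$. Then $bv=\mathtt{gt}$ is left-maximal, yet $(\mathtt{t},\mathtt{g})$ is an SRE of $R_i$. No test is needed at all. Every occurrence of $vx$ in $T_i$ has its $v$ ending before $i$, so it is preceded by $b$. Hence no $(x\overleftarrow{v},b)$ is a right extension, and $(\overleftarrow{v},b)$ is automatically supermaximal.
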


Recall that $\BlockTime{i}=F\cdot\BTHeight{i}+L\cdot(1+\log_w r_i)$ and $\BTHeight{i}=1+\log_F(1+i/L)$. By choosing appropriate values of the block size $L$ and the fan-out $F$, as in Corollary~\ref{cor-LPFA}, we obtain, 

\begin{corollary}\label{cor-SSS}
In each round \(i\), updating \(\Aset{i-1}\) to \(\Aset{i}\) admits either trade-off below:
\begin{itemize}
\item[(a)] $O(r_i\log n+i)$ bits of space and $O(\log^2 n/\log\log n)$ update time; and
\item[(b)] $O(r_i\log n+i\log\log n)$ bits of space and $O((\log n/\log\log n)^2)$ update time.
\end{itemize}
\end{corollary}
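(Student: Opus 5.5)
The plan is to reduce Corollary~\ref{cor-SSS} to Theorem~\ref{theorem-SSS} by choosing $L$ and $F$ exactly as in the proof of Corollary~\ref{cor-LPFA}. Theorem~\ref{theorem-SSS} gives an update time of $O(\log r_i+\BlockTime{i})$ and a space bound of $O(w(r_i+i/L))$ bits. These coincide with the bounds of Theorem~\ref{theorem-LPFA}, so the calculation carries over unchanged. Throughout I use $w=\Theta(\log n)$, $r_i\le n$ (hence $\log_w r_i=O(\log n/\log\log n)$ and $\log r_i=O(\log n)$), and $i\le n$. The latter is important: $H_i$ and hence $\BlockTime{i}$ are increasing in $i$, so bounding with $n$ is valid. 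The parameters $L$ and $F$ are fixed from round one, using the standing assumption in the preliminaries that $n$ is known in advance.

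For trade-off (a), I take $F=4$ and $L=\Theta(\log n)$.
\begin{itemize}
\item The space is $O(w r_i+w i/L)=O(r_i\log n+i)$ bits.
\item In the time bound, $F\cdot H_i=O(1+\log(1+i/L))=O(\log n)$, and $\log r_i=O(\log n)$.
\item The block-enumeration term is $L(1+\log_w r_i)=O(\log n\cdot\log n/\log\log n)$.
\end{itemize}
The last term dominates, giving $O(\log^2 n/\log\log n)$ update time.

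For trade-off (b), I take $L=\Theta(\log n/\log\log n)$ and $F=\log^{\epsilon}n$ for a fixed constant $0<\epsilon<1$.
\begin{itemize}
\item The space becomes $O(r_i\log n+i\log\log n)$ bits, since $w/L=\Theta(\log\log n)$.
\item The enumeration term is $L(1+\log_w r_i)=O((\log n/\log\log n)^2)$.
\item The B-tree term is $F\cdot H_i=O(\log^{\epsilon}n\cdot \log n/\log\log n)$, because $\log_F(1+i/L)=O(\log n/(\epsilon\log\log n))$. This is $O((\log n/\log\log n)^2)$, since $\log^{\epsilon}n=O(\log n/\log\log n)$ for $\epsilon<1$.
\item The additive $\log r_i=O(\log n)$ is likewise dominated.
\end{itemize}

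The only step needing care is the comparison in (b) showing that $F H_i$ does not exceed the enumeration cost; this is what forces $\epsilon<1$. Everything else is direct substitution into Theorem~\ref{theorem-SSS}. I would also remark that, by Proposition~\ref{pro-bound-on-r}, $r_i\le r+2$, so the peak space over the whole execution is $O(r\log n+n)$ and $O(r\log n+n\log\log n)$ bits, respectively.
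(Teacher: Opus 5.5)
Your proposal is correct and matches the paper's proof: the paper obtains Corollary~\ref{cor-SSS} by substituting into the bounds of Theorem~\ref{theorem-SSS} (which coincide with those of Theorem~\ref{theorem-LPFA}) the same parameter choices used for Corollary~\ref{cor-LPFA}, namely $F=4$, $L=\Theta(\log n)$ for (a) and $L=\Theta(\log n/\log\log n)$, $F=\log^{\epsilon}n$ with $0<\epsilon<1$ for (b). One small aside: your remark that $\BlockTime{i}$ is increasing in $i$ is not quite right, since $r_i$ can decrease from one round to the next, but nothing depends on it, because you bound $i\le n$ and $r_i\le n$ directly.
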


By Proposition~\ref{pro-bound-on-r} and Corollary~\ref{cor-SSS}, the two trade-offs use $O(r\log n+n)$ bits and $O(r\log n+n\log\log n)$ bits of working space, respectively, throughout the execution of the online algorithm, where $r$ is the number of runs in $BWT(\overleftarrow{T[1..n]})$.


\subsubsection{The high-level idea of the algorithm}
\label{def-sss-high-level}
Our online construction of smallest suffixient sets is inspired by the algorithm
of~\cite[Section~4.1]{koppl2026smallestsuffixientsetmaintenance}. 
In each round $i\ge 1$, the records $\langle p,\Delta \rangle \in \Aset{i}$ are stored in
a dynamic fusion tree $\mathcal{B}$, keyed lexicographically by the pair
$\langle p,\Delta \rangle$. We encode each pair $\langle p,\Delta \rangle$ as the integer
$(p-1)(n+1)+\Delta$, which preserves the lexicographic order of the pairs
and uses $O(\log n)$ bits. Thus, insertions, deletions, and membership
queries on $\mathcal{B}$ take $O(1+\log_w |\Aset{i}|)$ time. As
$|\Aset{i}|=O(r_i)$, each such operation takes $O(1+\log_w r_i)$ time.

In the first round \((i=1)\), 
we initialize $\mathcal{B}$ with the record $\langle 1, 1 \rangle$, regarding it as $\Aset{1}$ for convenience, although a smallest suffixient set is defined only for a text that contains at least two distinct characters.
Now consider a round \(i\ge2\).  Let
\(c=T[i]\), let \(T_{i-1}=T[1..i-1]\), and let
\(R_{i-1}=\overleftarrow{T_{i-1}}\).  We start from \(\mathcal{B}\) storing \(\Aset{i-1}\), and update it so that it stores
\(\Aset{i}\).

If \(c\) does not occur in \(T_{i-1}\), then the new character creates the
new SRE \((\varepsilon,c)\) for \(R_i=c\cdot R_{i-1}\), where $\varepsilon$ denotes an empty string.  In the coordinate
system of \(T_i\), this SRE is represented by the length-one occurrence
\(T_i[i]\).  Hence, we insert the record \(\langle i,1 \rangle\) into \(\mathcal{B}\),
regard the resulting tree as \(\Aset{i}\), and proceed to the next round.

Otherwise, $c$ occurs in \(T_{i-1}\).
Let \(P\) be the longest suffix of \(T_{i-1}\), possibly empty, such that
\(P\cdot c\) occurs in \(T_{i-1}\), and let \(Q=\overleftarrow{P}\).
Equivalently, \(Q\) is the longest prefix of \(R_{i-1}\) such that
\(c\cdot Q\) occurs in \(R_{i-1}\).  Let
\(y=T[i-1-|P|]\), so that \(y\cdot P\) is a suffix of \(T_{i-1}\).

\begin{figure}[t]
    \centering
    \includegraphics[width=1\textwidth]{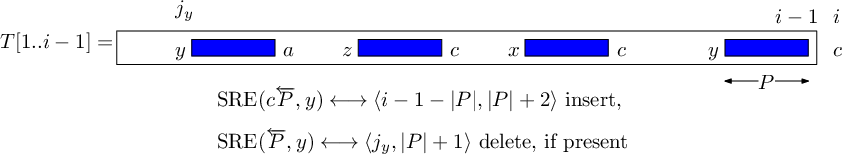}
    \caption{Illustration of the right-maximal case. Each blue block represents an occurrence of $P$. The occurrences $z\cdot P\cdot c$ and $x\cdot P\cdot c$, where $x\ne z$, show that $c\cdot\overleftarrow{P}$ has at least two distinct right extensions in $R_{i-1}$. Appending $c=T[i]$ creates the suffix occurrence $y\cdot P\cdot c$. Accordingly, the record representing $(c\cdot\overleftarrow{P},y)$ is inserted, while the record representing $(\overleftarrow{P},y)$ is deleted if present.}
    \label{fig:right-maximal}
\end{figure}

\begin{itemize}
    \item \textbf{Step 1: New-SRE insertion.}
\label{step-SRE-insert}
By~\cite[Lemma~8(a)]{koppl2026smallestsuffixientsetmaintenance},
$(c\cdot Q,y)=(\overleftarrow{P\cdot c},y)$ is a new SRE of $R_i=c\cdot R_{i-1}$.
In the coordinate system of \(T_i\),
this SRE is represented by the occurrence \(y\cdot P\cdot c\), which starts
at position \(i-1-|P|\) and has length \(|P|+2\).  We therefore insert the record $\langle i-1-|P|,\ |P|+2\rangle$ into \(\mathcal{B}\).
    \item \textbf{Step 2: Right-extension classification.}
\label{step-classification}
We check whether $c\cdot Q=\overleftarrow{P\cdot c}$ is right-maximal in $R_{i-1}$ and perform one of the following two updates accordingly:
\begin{itemize}
    \item \textbf{Case 1: Right-maximal case.} \label{case:right-maximal}
If \(c\cdot Q=\overleftarrow{P\cdot c}\) has at least two distinct right
extensions in \(R_{i-1}\), corresponding to \cite[Lemma 8(a)]{koppl2026smallestsuffixientsetmaintenance}, then we search for the leftmost occurrence
\(j_y\) of \(y\cdot P\) in \(T_{i-1}\), and remove the record $\langle j_y,\ |P|+1 \rangle$ from \(\mathcal{B}\), if it is present. An illustration of this case is given in Figure~\ref{fig:right-maximal}. 
    \item \textbf{Case 2: Unique-extension case.} \label{case:unique-ext}
Otherwise, all occurrences of \(c\cdot Q=\overleftarrow{P\cdot c}\) in
\(R_{i-1}\) are followed by the same character \(z\ne y\), corresponding to \cite[Lemma 8(b)]{koppl2026smallestsuffixientsetmaintenance}.  
In this case, we first
search for the leftmost occurrence \(j_{zc}\) of \(z\cdot P\cdot c\) in
\(T_{i-1}\), and insert the record $\langle j_{zc},\ |P|+2\rangle$
into \(\mathcal{B}\).
We then search for the leftmost occurrences
\(j_y\) and \(j_z\) of \(y\cdot P\) and \(z\cdot P\) in \(T_{i-1}\),
respectively and remove each of the records $\langle j_y,\ |P|+1\rangle$ and $\langle j_z,\ |P|+1\rangle$ from \(\mathcal{B}\), if it is present.
See Figure~\ref{fig:unique-case} for an illustration. 
\end{itemize}
\end{itemize}

After these updates, $\mathcal{B}$ represents
\(\Aset{i}\), and the algorithm proceeds to the next round.

\subsubsection{The correctness of the online algorithm for smallest suffixient sets}
In Lemma~\ref{lem-correctness-sss}, we show that our right-maximal (\hyperref[case:right-maximal]{Case~1}) and unique-extension (\hyperref[case:unique-ext]{Case~2}) cases perform exactly the SRE updates corresponding to the hard-link and soft-link cases of~\cite[Lemma~8]{koppl2026smallestsuffixientsetmaintenance}, respectively. We also prove that every maintained record stores a leftmost representative occurrence.



\begin{lemma}\label{lem-correctness-sss}
For every \(i> 1\), the set $  \{\, i-p+1 : \langle p,\Delta \rangle \in \Aset{i}\,\}$
is a smallest suffixient set for \(R_i=\overleftarrow{T[1..i]}\).
Moreover, each record \(\langle p,\Delta \rangle \in \Aset{i}\) represents the
leftmost occurrence \(T[p..p+\Delta-1]\) in \(T[1..i]\). Equivalently, it corresponds
to the SRE $\left(\overleftarrow{T[p+1..p+\Delta-1]},\,T[p]\right)$
of \(\overleftarrow{T[1..i]}\).
\end{lemma}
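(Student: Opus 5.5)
The plan is to proceed by induction on $i$, maintaining the invariant that $\Aset{i}$ is in bijection with $\mathcal{S}_r(R_i)$. Under this bijection, a record $\langle p,\Delta\rangle$ corresponds to the SRE $(\overleftarrow{T[p+1..p+\Delta-1]},T[p])$. Moreover, $T[p..p+\Delta-1]$ is the leftmost occurrence in $T[1..i]$ of that string, which is the same as the rightmost occurrence of its reversal in $R_i$. Given this invariant, the first claim follows immediately. Each record yields the position $i-p+1$ of $R_i$, which is the ending position in $R_i$ of the reversed occurrence $ua$ with $u=\overleftarrow{T[p+1..p+\Delta-1]}$ and $a=T[p]$. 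By the definition of a smallest suffixient set, choosing one occurrence per SRE gives such a set. The base cases are the first round containing two distinct characters, together with the rounds where $c$ is new. In those rounds the only new SRE is $(\varepsilon,c)$, whose unique (hence leftmost) occurrence is $T[i]$. One also has to check that no old SRE is destroyed, since prepending a fresh $c$ to $R_{i-1}$ cannot make an old extension $(u,a)$ non-supermaximal.

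For the inductive step with $c$ occurring in $T_{i-1}$, I would invoke~\cite[Lemma~8]{koppl2026smallestsuffixientsetmaintenance}. It describes $\mathcal{S}_r(R_i)$ in terms of $\mathcal{S}_r(R_{i-1})$ as follows. The SRE $(cQ,y)$ is always added. In the hard-link case (Lemma~8(a)), $(Q,y)$ is removed if it was an SRE. In the soft-link case (Lemma~8(b)), $(cQ,z)$ is added, and $(Q,y)$ and $(Q,z)$ are removed if present. I will match Steps~1 and~2 to exactly these set operations. Two facts need checking. First, the test in Step~2 of whether $cQ$ is right-maximal in $R_{i-1}$ coincides with the hard/soft distinction of Lemma~8. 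By maximality of $P$, $cQ$ occurs in $R_{i-1}$ but $cQy$ does not, so in the unique-extension case the single extension $z$ satisfies $z\neq y$. Second, the record removed for the old SRE $(Q,y)$ is $\langle j_y,|P|+1\rangle$, where $j_y$ is the leftmost occurrence of $yP$ in $T_{i-1}$; by the inductive hypothesis this is exactly the stored record. The case $(Q,z)$ with record $\langle j_z,|P|+1\rangle$ is handled the same way. This step relies crucially on the leftmost-occurrence invariant, since a removal by a different key would silently fail.

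It remains to show that the leftmost-occurrence property holds for all records in $\Aset{i}$. For an inserted record this is immediate. The record $\langle i-1-|P|,|P|+2\rangle$ represents $yPc$, which does not occur in $T_{i-1}$, so its only occurrence ends at $i$ and is therefore leftmost. The record $\langle j_{zc},|P|+2\rangle$ is leftmost by construction. For a surviving old record $\langle p,\Delta\rangle$, its string already occurs in $T_{i-1}$, and appending $T[i]$ only adds occurrences ending at $i$, which lie further right. Hence the leftmost occurrence is unchanged, and the key is still correct.

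I expect the main obstacle to be making the translation between the two coordinate systems and Lemma~8 airtight. Lemma~8 is phrased for prepending to the reversed string, and I must verify that ``rightmost in $R_i$'' is the same as ``leftmost in $T_i$'', including the shift of end positions by $i-p+1$ when $i$ grows. Records of the form $\langle p,\Delta\rangle$ stay valid because $p$ is measured from the left of $T$, which is fixed; only the derived position $i-p+1$ changes. I would also double-check that no other SRE of $R_{i-1}$ loses supermaximality. This is precisely the content of Lemma~8, which I will cite rather than reprove.
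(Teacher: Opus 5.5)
Your proposal is correct and follows essentially the same route as the paper. The paper also argues by induction on $i$: it handles $i=2$ and the fresh-character rounds directly, matches \cite[Lemma~8]{koppl2026smallestsuffixientsetmaintenance} to Cases~1 and~2, and uses the leftmost-occurrence invariant in both directions, namely uniqueness of $yPc$ by maximality of $P$ plus the explicit search for $j_{zc}$ for insertions, and the induction hypothesis so that the deletions of $\langle j_y,|P|+1\rangle$ and $\langle j_z,|P|+1\rangle$ hit the stored keys.
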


\begin{figure}[t]
    \centering
    \includegraphics[width=1\textwidth]{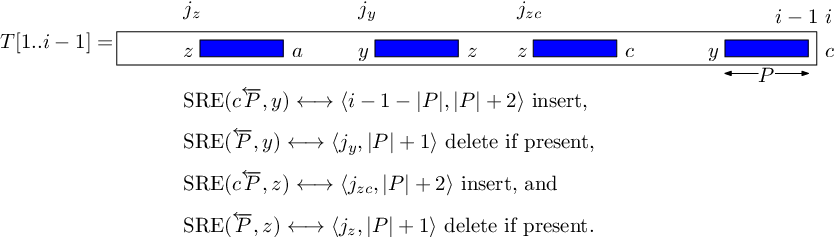}
\caption{Illustration of the unique-extension case. Each blue block represents an occurrence of $P$. In this  instance, $P\cdot c$ occurs only once in $T[1..i-1]$, and this occurrence is preceded by the character $z\ne y$. The positions $j_y$, $j_z$, and $j_{zc}$ denote the leftmost occurrences of $y\cdot P$, $z\cdot P$, and $z\cdot P\cdot c$, respectively. Appending $c=T[i]$ creates the suffix occurrence $y\cdot P\cdot c$. Accordingly, the records representing $(c\cdot\overleftarrow{P},y)$ and $(c\cdot\overleftarrow{P},z)$ are inserted, while those representing $(\overleftarrow{P},y)$ and $(\overleftarrow{P},z)$ are deleted if present.}
    \label{fig:unique-case}
\end{figure}

\begin{proof}
For \(i=2\), since \(T[1]=\$\) and \(\$\) does not occur elsewhere, we have
\(T[2]\neq \$\).  The algorithm inserts the record \(\langle 2,1 \rangle\), and the tree $\mathcal{B}$
contains exactly the two records \(\langle 1,1\rangle\) and \(\langle 2,1 \rangle\).  These records
represent the two SREs \((\varepsilon,\$)\) and
\((\varepsilon,T[2])\) of \(R_2=\overleftarrow{T[1..2]}\), respectively, where $\varepsilon$ denotes the empty string.
Both records clearly store the leftmost occurrences of the represented extensions
in \(T[1..2]\).  Thus the statement holds for \(i=2\).

    Assume now that the statement holds after round \(i-1\), where \(i>2\).  Let
\(c=T[i]\), \(T_{i-1}=T[1..i-1]\), and \(R_{i-1}=\overleftarrow{T_{i-1}}\).
We prove that, after processing \(c\), the resulting tree $\mathcal{B}$ represents
\(\Aset{i}\).

Suppose that \(c\) does not occur in \(T_{i-1}\).  Then, in
\(R_i=cR_{i-1}\), the only new SRE is
\((\varepsilon,c)\).  Indeed, any nonempty string beginning with \(c\) occurs
only once in \(R_i\), and therefore cannot be right-maximal; consequently, it
cannot create any new SRE other than
\((\varepsilon,c)\), nor can it destroy an old one. 
The algorithm inserts $\langle i,1\rangle$, which represents the unique, and hence leftmost, occurrence $T[i]=c$.  
All old records remain correct by the induction
hypothesis, and hence the statement holds in this case.

	It remains to consider the case that $c$ occurs in $T_{i-1}$. Let $P$ be the longest suffix of $T_{i-1}$ such that $P\cdot c$ occurs in $T_{i-1}$, and let $Q=\overleftarrow{P}$. We apply~\cite[Lemma~8]{koppl2026smallestsuffixientsetmaintenance} to the right-to-left update from $R_{i-1}$ to $R_i=c\cdot R_{i-1}$. Under our terminology, the hard-link case of~\cite[Lemma~8(a)]{koppl2026smallestsuffixientsetmaintenance} is the right-maximal case (\hyperref[case:right-maximal]{Case~1}), in which $c\cdot Q=\overleftarrow{P\cdot c}$ has at least two distinct right extensions in $R_{i-1}$. The soft-link case of~\cite[Lemma~8(b)]{koppl2026smallestsuffixientsetmaintenance} is the unique-extension case (\hyperref[case:unique-ext]{Case~2}), in which all occurrences of $c\cdot Q$ in $R_{i-1}$ are followed by the same character $z$.
    
    Let $y=T[i-1-|P|]$, so that the prefix occurrence of $c\cdot Q$ in $R_i$ is followed by $y$.
    By~\cite[Lemma 8]{koppl2026smallestsuffixientsetmaintenance}, the only changes in the set of SREs are the following: In the right-maximal case (\hyperref[case:right-maximal]{Case~1}), the new SRE $(cQ,y)$ is added and $(Q,y)$ is deleted if it was present. In the unique-extension case (\hyperref[case:unique-ext]{Case~2}), the two new SREs $(cQ,y)$ and $(cQ,z)$ are added, where $z\ne y$, and $(Q,y)$ and $(Q,z)$ are deleted if they were present.
    We now verify that the algorithm performs these corresponding updates.

		In the coordinate system of $T_i$, the SRE $(cQ,y)$ is represented by the occurrence $yPc$, whose start position is $i-1-|P|$ and whose length is $|P|+2$. This occurrence is unique, and hence leftmost: if $yPc$ occurred earlier in $T_{i-1}$, then $yP$ would be a longer suffix of $T_{i-1}$ whose concatenation with $c$ occurs in $T_{i-1}$, contradicting the maximality of $P$. Hence the inserted record $\langle i-1-|P|, |P|+2\rangle$ is correct.
		
		In the unique-extension case (\hyperref[case:unique-ext]{Case~2}), the additional new SRE $(cQ,z)$ is represented in $T_i$ by an occurrence of $zPc$. The algorithm explicitly searches for the leftmost such occurrence $j_{zc}$ in $T_{i-1}$ and inserts the record $\langle j_{zc}, |P|+2\rangle$, so this insertion is also correct. 

        Consider now an old SRE $(Q,a)$, where $a\in\{y,z\}$, that ceases to be supermaximal during the update. In the coordinate system of $T_{i-1}$, it is represented by the string $a\cdot P$. By the induction hypothesis, $\mathcal{B}$ contains the record
$\langle j_a,\ |P|+1\rangle$,
where $j_a$ is the leftmost occurrence of $a\cdot P$ in $T_{i-1}$.
        Therefore, the algorithm
        finds this record and deletes it correctly.
		
		Following \cite[Lemma~8]{koppl2026smallestsuffixientsetmaintenance}, no other SRE is inserted or deleted. All records not explicitly deleted remain valid by the induction hypothesis, and their stored occurrences remain leftmost because appending $c$ to the end of $T_{i-1}$ cannot create an earlier occurrence of any old represented string. Hence, after the update, $\Aset{i}$ contains exactly one leftmost representative record for every SRE of $R_i$, and therefore $\{i-p+1 : \langle p,\Delta\rangle\in \Aset{i}\}$ is a smallest suffixient set for $R_i$.
        This completes the induction.
\end{proof}

\subsubsection{The detailed description of the algorithm and its complexity} 
We explain how the queries required to perform these updates are supported by our data structures.

Let $c=T[i]$.
Henceforth, assume that our incremental BWT-based index has already been updated to round $i$ and therefore represents the current prefix $T[1..i]$.

As shown in the previous section, the first step in round $i$ is to decide whether $c$ appears in $T[1..i-1]$.
Observe that $c$ appears in $T[1..i-1]$ if and only if the number of occurrences of $c$ in $T[1..i]$ is at least two.
Hence, Lemma~\ref{lem-rlBWT-online} is sufficient to count the number of occurrences of $c$ in $T[1..i]$, thereby answering this query, taking $O(\log r_i)$ time.

If $c$ does not appear in $T[1..i-1]$, then we simply insert $\langle i, 1\rangle$ into $\mathcal{B}$ that stores records in $\Aset{i-1}$, taking $O(1+\log_w |\Aset{i-1}|)\subseteq O(1+\log_w r_i)$ time, since $|\Aset{i-1}|\le 2r_{i-1} \subseteq O(r_i)$.  Then, proceed to the next round.

Henceforth, assume $c$ appears in $T[1..i-1]$.
To perform Step~1 of the algorithm, the new-SRE insertion, we first determine the length of the longest suffix $P$ of $T[1..i-1]$ such that $P\cdot c$ occurs in $T[1..i-1]$.
Observe that $|P|=LRS[i]-1$.
Therefore, by Theorem~\ref{theorem-LPFA}, $|P|$ can be computed in $O(\log r_i+\BlockTime{i})$ time.
Once $|P|$ is known, inserting the record $\langle i-1-|P|, |P|+2 \rangle$ into $\mathcal{B}$ takes $O(1+\log_w |\Aset{i-1}|)\subseteq O(1+\log_w r_i)$ time.

To perform Step~2 of the algorithm, the right-extension classification, we determine whether $\overleftarrow{P\cdot c}$ has at least two distinct right extensions in $\overleftarrow{T[1..i-1]}$. To this end, we first compute the $PA$-interval $[e..f]$ of $P\cdot c$ as follows.

Let $\ell$ denote the position of the newly added $LCS$ entry in this round, which is also the colexicographical rank of $T[1..i]$ among all the prefixes of $T[1..i]$ (see details in Section~\ref{sect-update-B-tree-Fusion-Tree}).
Observe that $P\cdot c$ is a suffix of $T[1..i]$ and $\ell=PA^{-1}[i]$.
By applying Lemma~\ref{lem-pa-interval}, setting the query parameters $[j_1, j_2]\leftarrow [i-|P|..i]$, the interval $[e..f]$ can be found in $O(\BlockTime{i})$ time.

Note that $\ell\in [e..f]$.
Let $y= T[i-|P|-1]$.
If $e<\ell <f$, then there exist two substrings $x\cdot P \cdot c$ and $z\cdot P \cdot c$ with $x<y<z$ in $T[1..i-1]$; this implies that  \(\overleftarrow{P\cdot c}\) has at least two distinct right extensions in \(\overleftarrow{T[1..i-1]}\).

Otherwise, either $\ell=f$ or $\ell=e$.
Observe that if \(\overleftarrow{P\cdot c}\) has at least two distinct right extensions in \(\overleftarrow{T[1..i-1]}\), then their longest common prefix is exactly \(\overleftarrow{P\cdot c}\) of length $|P|+1$.
Hence, if $\ell=f$, then \(\overleftarrow{P\cdot c}\) has at least two distinct right extensions in \(\overleftarrow{T[1..i-1]}\) if and only if $f>e+1$ and $\min \{LCS_i[e+1],\dots, LCS_i[f-1]\}=|P|+1$, where $LCS_i$ denotes $LCS(T[1..i])$.
If $\ell=e$, the symmetric test is $f>e+1$ and $\min\{LCS_i[e+2],\ldots,LCS_i[f]\}=|P|+1$.

By Lemma~\ref{lem-RMQ}, a range minimum query over the array $LCS_i$ can be supported in $O(\BlockTime{i})$ time.
Therefore, we can decide whether \(\overleftarrow{P\cdot c}\) has at least two distinct right extensions in \(\overleftarrow{T[1..i-1]}\), spending $O(\BlockTime{i})$ time.

If \(\overleftarrow{P\cdot c}\) has at least two distinct right extensions in \(\overleftarrow{T[1..i-1]}\), then we are in the right-maximal case (\hyperref[case:right-maximal]{Case~1}). 
As prescribed before, we need to find the leftmost occurrence $j_y$ of $y\cdot P$ in $T[1..i-1]$, which is also the leftmost occurrence in $T[1..i]$, since $c$ is appended to the right-end of $T[1..i-1]$.

Recall that $y\cdot P=T[i-1-|P|..i-1]$ and that $\ell=PA^{-1}[i]$. 
The position \(PA^{-1}[i-1]\) can be computed by the \(FL\)-mapping in the FM-index~\cite{FerraginaM00}: if \(x\) is the number of occurrences in \(BWT(\overleftarrow{T[1..i]})\) of symbols that are smaller than \(c\), then \(PA^{-1}[i-1]=\select_c(BWT(\overleftarrow{T[1..i]}), \ell-x)\).
Hence,  \(PA^{-1}[i-1]\) can be computed in $O(\log r_i)$ time by Lemma~\ref{lem-rlBWT-online}.
Once \(PA^{-1}[i-1]\) is known, we can apply Lemma~\ref{lem-pa-interval} to find the leftmost occurrence $j_y$ of $y\cdot P$ in $T[1..i-1]$, taking $O(\BlockTime{i})$ time.
After finding $j_y$, we check whether the record $\langle j_y, |P|+1\rangle$ appears in $\mathcal{B}$; if it is present, we delete it from $\mathcal{B}$, as prescribed by the right-maximal-case (\hyperref[case:right-maximal]{Case~1}).
This case takes overall $O(\log r_i+\BlockTime{i})$ time in round $i$.

Otherwise, all occurrences of $\overleftarrow{P\cdot c}$ in $\overleftarrow{T[1..i-1]}$ are followed by the same character $z\ne y$, and we are therefore in the unique-extension case (\hyperref[case:unique-ext]{Case~2}). 
In this case, we also need to compute the record $\langle j_y, |P|+1\rangle$ as the right-maximal case (\hyperref[case:right-maximal]{Case~1}).
In addition, as prescribed by the unique-extension case (\hyperref[case:unique-ext]{Case~2}) of the algorithm, we need to find the leftmost occurrences of $z \cdot P\cdot c$ and $z\cdot P$ in $T[1..i-1]$, which are also the leftmost occurrences in $T[1..i]$.

For the leftmost occurrence $j_{zc}$ of $z \cdot P\cdot c$ in $T[1..i]$, recall that $[e..f]$ is the PA-interval of $P\cdot c$ and that $y\cdot P\cdot c$ has a unique occurrence in $T[1..i]$.
Hence, $\ell$, the position of the newly added LCS entry, has to be either $e$ or $f$.
Observe that if $\ell=f$, then the PA-interval of $z\cdot P \cdot c$ is exactly $[e..f-1]$, since all occurrences of $\overleftarrow{P\cdot c}$ in $\overleftarrow{T[1..i-1]}$ are followed by the same character $z\ne y$; otherwise, this PA-interval is $[e+1, f]$.
With the PA-interval, we can compute the leftmost occurrence $j_{zc}$ of $z\cdot P \cdot c$ in $T[1..i]$, as well as the record $\langle j_{zc}, |P|+2\rangle$, following the same method shown earlier.
Then, we insert the record into $\mathcal{B}$.
This takes $O(\log r_i+\BlockTime{i})$ time in round $i$.

For the leftmost occurrence $j_z$ of $z \cdot P$ in $T[1..i]$, we need to find the PA-interval $z\cdot P$ in $T[1..i]$. To this end, we first find one position $h$ in this PA-interval such that $z\cdot P$ is a suffix of $T[1..PA[h]]$. Assume first that $\ell=f$. Since $[e..f]$ is the PA-interval of $P\cdot c$, the row $f$ corresponds to the unique occurrence $y\cdot P\cdot c$ as a suffix of $T[1..i]$. Moreover, all occurrences of $\overleftarrow{P\cdot c}$ in $\overleftarrow{T[1..i-1]}$ are followed by the same character $z\ne y$. Hence, $f-1$ is a position in the PA-interval of $z\cdot P\cdot c$ (Symmetrically, if $\ell=e$, then $e+1$ is a position in the PA-interval of $z\cdot P\cdot c$.)  
Let $p=PA_i[f-1]$, where $PA_i$ denotes $PA(T[1..i])$. Then $z\cdot P\cdot c$ is a suffix of $T[1..p]$, and therefore \(z\cdot P\) is a suffix of \(T[1..p-1]\). Thus \(h=PA_i^{-1}[p-1]=PA_i^{-1}[PA_i[f-1]-1]\) is a position in the PA-interval of \(z\cdot P\).

The position \(h\) can be computed again by the \(FL\)-mapping: if \(x\) is the number of occurrences in \(BWT(\overleftarrow{T[1..i]})\) of symbols that are smaller than \(c\), then \(h=\select_c(BWT(\overleftarrow{T[1..i]}), f-1-x)\). By Lemma~\ref{lem-rlBWT-online}, this takes \(O(\log r_i)\) time. 
The case \(\ell=e\) is symmetric: we use the row \(e+1\) instead of \(f-1\) and set $p= PA_i[e+1]$.
%
With \(h\), one position in the PA-interval of \(z\cdot P\), we can compute the whole PA-interval and the leftmost occurrence $j_z$ of $z\cdot P$ in $T[1..i]$ by Lemma~\ref{lem-pa-interval}, setting the query interval $[j_1..j_2]\leftarrow[p-|P|-1..p-1]$. 
Finally, we remove the record \(\langle j_z, |P|+1\rangle\) from \(\mathcal{B}\) if it is present.
This procedure also takes $O(\log r_i+\BlockTime{i})$ time.

As a result, we achieve Theorem~\ref{theorem-SSS} shown in the beginning of Section~\ref{sect-online-consctruct-sss}.

\subsubsection{Lower bounds for online suffixient-set maintenance} \label{def-sss-reduction}

We derive two lower bounds through reductions from
online LRS computation. First, we show that maintaining any
length-annotated SSS of \(R_i=\overleftarrow{T_i}\) is sufficient to
compute \(\mathrm{LRS}[i]\). 
Let \(A_i\) be any smallest suffixient set of \(R_i\), not necessarily the rightmost SSS maintained by our algorithm.
We denote by \(\widehat{A}_i\) its length-annotated SSS, expressed in the coordinate
system of \(T_i\). Formally,
\(\widehat{A}_i=\{\langle i-k+1,\Delta\rangle : k\in A_i\}\), where \(\Delta\) is the length of the SRE represented by the occurrence \(R_i[k-\Delta+1..k]\). This reduction is established in Lemma~\ref{lem-reduction}. Then, we show, in Theorem~\ref{theorem-lower-bound-SSS}(b), that the SRE lengths are
unnecessary when the rightmost occurrence of every SRE is selected.

\begin{lemma}\label{lem-reduction}
For every \(i\ge 1\), \(\mathrm{LRS}[i]=\max\{\,\Delta-1 : \langle p,\Delta\rangle\in \widehat{A}_i \text{ and } p+\Delta-1=i\,\}\).
\end{lemma}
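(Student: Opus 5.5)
The plan is to translate records ending at position $i$ in $T_i$-coordinates into occurrences at the very beginning of $R_i$, and then prove the two inequalities separately. A record $\langle p,\Delta\rangle\in\widehat{A}_i$ comes from $k=i-p+1\in A_i$ and the occurrence $R_i[k-\Delta+1..k]$. The condition $p+\Delta-1=i$ is equivalent to $k-\Delta+1=1$. So the records in the maximum are exactly those whose selected occurrence is the prefix $R_i[1..\Delta]$ of $R_i$. Such a prefix is $ua$ for an SRE $(u,a)$ with $|u|=\Delta-1$, and $u$ is the reversal of the suffix $T[i-\Delta+2..i]$ of $T_i$. Throughout, I take the maximum over an empty set to be $0$. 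The small case $i=1$ I treat directly by the convention of the record $\langle 1,1\rangle$, where $\mathrm{LRS}[1]=0$.

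\emph{Upper bound ($\le$).} Take any record with $p+\Delta-1=i$ and let $(u,a)$ be its SRE. Then $u$ is right-maximal in $R_i$, so it has at least two occurrences in $R_i$, one of them the prefix. Any other occurrence of $u$, read in $T$-coordinates, is an occurrence of the suffix $T[i-\Delta+2..i]$ ending at some position $j<i$. Hence $\Delta-1\le \mathrm{LRS}[i]$.

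\emph{Lower bound ($\ge$).} Let $\ell=\mathrm{LRS}[i]$ and let $u=\overleftarrow{T[i-\ell+1..i]}=R_i[1..\ell]$. The steps are as follows.
\begin{enumerate}
\item Since $\$$ occurs only at $T[1]$, a repeated suffix cannot contain $\$$. Hence $\ell\le i-1$, and $a:=R_i[\ell+1]=T[i-\ell]$ exists.
\item Every earlier occurrence of the suffix, ending at some $j<i$, likewise avoids position $1$. So it is preceded by $T[j-\ell]$.
\item By maximality of $\ell$, we have $T[j-\ell]\ne T[i-\ell]$ for every such $j$. Therefore $u$ is followed in $R_i$ by at least two distinct characters, so $u$ is right-maximal and $(u,a)$ is a right extension.
\item Maximality of $\ell$ also shows that $ua$ occurs in $R_i$ only as the prefix. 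Otherwise $T[i-\ell..i]$ would repeat.
\item Consequently no $(cu,a)$ can be a right extension, because $cua$ would contain an occurrence of $ua$ not starting at position $1$. So $(u,a)$ is an SRE.
\item Since $ua$ has a unique occurrence, every smallest suffixient set, in particular $A_i$, must select its ending position $\ell+1$. The corresponding record is $\langle i-\ell,\ell+1\rangle$, which satisfies $p+\Delta-1=i$ and contributes $\Delta-1=\ell$.
\end{enumerate}

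The main obstacle is the lower-bound step showing that $(u,a)$ is genuinely supermaximal and that its occurrence is forced into every smallest suffixient set. This is where uniqueness of the occurrence of $ua$ is used, and where the unique sentinel $\$$ rules out the boundary cases. The degenerate case $\ell=0$ needs a quick separate check. There $u=\varepsilon$, which is right-maximal as soon as $i\ge2$ because $R_i$ contains at least the two characters $T[i]$ and $\$$, so the same argument applies with $u=\varepsilon$.
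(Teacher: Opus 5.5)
Your proposal is correct and follows essentially the same route as the paper. For the $\le$ direction you use right-maximality of the SRE to exhibit an earlier occurrence of the suffix. For the $\ge$ direction you show that $(R_i[1..\ell],R_i[\ell+1])$ is an SRE with a unique occurrence, which forces the record $\langle i-\ell,\ell+1\rangle$ into $\widehat{A}_i$; you are only more explicit than the paper about supermaximality and about the existence of the preceding character $T[j-\ell]$.
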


\begin{proof} 
Let \(\Lambda_i=\max\{\,\Delta : \langle p,\Delta \rangle\in \widehat{A}_i \text{ and } p+\Delta-1=i\,\}\).
We prove that $\Lambda_i$ is well-defined and that \(\mathrm{LRS}[i]=\Lambda_i-1\).
If \(i=1\), then by convention, \(\langle 1,1\rangle \in \widehat{A}_1\), and hence \(\Lambda_1=1\). Therefore, \(\mathrm{LRS}[1]=0=\Lambda_1-1\).
Now assume \(i>1\), and let $\ell=LRS[i]$. We first show that there exists a record in $\widehat{A}_i$ ending at $i$; this proves that $\Lambda_i$ is well-defined. If $\ell=0$, then the SRE $(\varepsilon,T[i])$ has a unique represented occurrence, so the record $\langle i,1\rangle$ belongs to $\widehat{A}_i$. If $\ell>0$, let $j<i$ satisfy $T[j-\ell+1..j]=T[i-\ell+1..i]$, and let $u=\overleftarrow{T[i-\ell+1..i]}$. By the maximality of $\ell$, the extension $(u,T[i-\ell])$ occurs only as the prefix of $R_i=\overleftarrow{T[1..i]}$, while $u$ is right-maximal. Hence this extension is an SRE with a unique represented occurrence, so $\langle i-\ell,\ell+1 \rangle\in\widehat{A}_i$. Thus $\Lambda_i$ is well-defined and $\Lambda_i\ge \ell+1$.

It remains to prove that $\mathrm{LRS}[i]\ge \Lambda_i-1$. Let \(\langle p,\Delta\rangle\in \widehat{A}_i\) be a record attaining \(\Lambda_i\), so \(p+\Delta-1=i\) and \(\Delta=\Lambda_i\). The corresponding SRE is \((\overleftarrow{T[p+1..i]}, T[p])\). In particular, \(\overleftarrow{T[p+1..i]}\) is right-maximal in \(R_i\). Hence \(T[p+1..i]\) occurs as a suffix of \(T[1..i]\) and also occurs previously in \(T[1..i-1]\). Therefore, by the definition of \(\mathrm{LRS}[i]\), we have \(\mathrm{LRS}[i]\ge |T[p+1..i]|=\Delta-1=\Lambda_i-1\).
%
Combining the two inequalities gives $\mathrm{LRS}[i]=\Lambda_i-1$
\end{proof}

\begin{example}
Consider $T[1..7] = \text{\$caabaa}$, so that $R_7 = \text{aabaac\$}$. The SREs of $R_7$ are $(\varepsilon, \$)$,
$(a,a)$, $(aa,b)$, and $(aa,c)$, represented by the substrings $\$$, $aa$, $aab$, and $aac$, respectively.
In the coordinate system of $T[1..7]$, the corresponding smallest suffixient set for $R_7$ can be represented by the record set ${\langle1,1\rangle,\langle2,3\rangle,\langle5,3\rangle,\langle6,2\rangle}$.
In particular, both $\langle6,2\rangle$ and $\langle5,3\rangle$ satisfy
$p+\Delta-1=7$, and they yield the candidate values
$\Delta-1=1$ and $\Delta-1=2$, respectively. The suffix $aa$ of
$T[1..7]$ occurs previously, whereas the longer suffix $baa$ does not.
Hence $\mathrm{LRS}[7]=2$.
Taking the maximum over all records ending at position $i=7$ therefore gives $\mathrm{LRS}[7]=2$. 
\end{example}




Lemma~\ref{lem-reduction} and Theorem~\ref{theorem-lower-bound-LRS} imply Theorem~\ref{theorem-lower-bound-SSS}(a), since \(LRS[i]\) can be recovered from a length-annotated SSS using only \(O(\log i)\) additional bits. 


\begin{restatable}{theorem}{lowerBoundSSS} \label{theorem-lower-bound-SSS}
Let \(n\ge 4\). Any deterministic online algorithm that, for every text
\(T[1..n]\) over \(\{\$,\#,0,1\}\) where \(\$\) occurs only at position \(1\),
correctly maintains after every round \(j\in[1..n]\) either of the following
SSS representations for \(\overleftarrow{T[1..j]}\) requires \(\Omega(n)\) bits
of peak working space in the worst case: (a) a length-annotated SSS; or
(b) the rightmost SSS.
\end{restatable}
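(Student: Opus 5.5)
Both parts will be proved by reduction to Theorem~\ref{theorem-lower-bound-LRS}. From a hypothetical online algorithm $\mathcal{M}$ maintaining the SSS representation, we build an online LRS algorithm $\mathcal{M}'$. In each round $j$, $\mathcal{M}'$ runs $\mathcal{M}$ on $T[j]$ and then extracts $LRS[j]$ from $\mathcal{M}$'s current state, using only $O(\log n)$ extra bits and no knowledge of earlier text. Suppose $\mathcal{M}$ used $o(n)$ bits of peak working space on all inputs over $\{\$,\#,0,1\}$. Then $\mathcal{M}'$ would use $o(n)+O(\log n)=o(n)$ bits, contradicting Theorem~\ref{theorem-lower-bound-LRS}. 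The one subtlety is that the extraction must be a deterministic function of the maintained representation alone. That holds here because the representation is stored in $\mathcal{M}$'s memory and the current round number $j$ costs only $O(\log n)$ bits.

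\textbf{Part (a).} Lemma~\ref{lem-reduction} already gives $LRS[j]=\max\{\Delta-1:\langle p,\Delta\rangle\in\widehat{A}_j,\ p+\Delta-1=j\}$ for every length-annotated SSS, whichever occurrences are chosen. $\mathcal{M}'$ therefore scans the records in $\mathcal{M}$'s state, keeps a running maximum in $O(\log n)$ bits, and outputs it. If the representation is given in the coordinates of $R_j$ rather than $T_j$, we convert with $p=j-k+1$. No further argument is needed.

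\textbf{Part (b).} Here only positions are stored, so we need a length-free characterization. Write positions in $T_j$ coordinates, so $RSSS$ becomes the set of \emph{leftmost} starting positions $p$ in $T_j$ of the strings $T[p..p+\Delta-1]$ representing SREs. The plan is to show
\[
LRS[j]=j-\min\bigl(\{\,p : p \text{ in the } T_j\text{-coordinate RSSS}\,\}\cap[j-LRS[j]..j]\bigr),
\]
or, more usefully, to show that $j-LRS[j]$ is the smallest element of $RSSS$ that is at least some computable threshold. The key facts are as follows. Let $\ell=LRS[j]$. By the proof of Lemma~\ref{lem-reduction}, the SRE $(\overleftarrow{T[j-\ell+1..j]},T[j-\ell])$ has a unique occurrence starting at $j-\ell$, so $j-\ell$ belongs to the set. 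Conversely, any selected occurrence starting at some $p<j-\ell$ that \emph{ends at $j$} is impossible, since it would give a longer repeated suffix.

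The main obstacle is that a selected position $p<j-\ell$ may belong to an SRE whose occurrence ends before $j$, and positions alone do not reveal where occurrences end. To handle this, I would use rightmost-ness (leftmost in $T_j$). Suppose an SRE $(u,a)$ is represented by $T[p..p+\Delta-1]$ with $p<j-\ell$ and $p+\Delta-1<j$. I would try to show that $p$ is also covered by an appropriately structured set, and then take a different approach to decoding: recover $\ell$ from the symmetric difference between consecutive RSSS states. Concretely, $\mathcal{M}'$ can run $\mathcal{M}$ and compare the RSSS before and after round $j$. By~\cite[Lemma~8]{koppl2026smallestsuffixientsetmaintenance} and Lemma~\ref{lem-correctness-sss}, the newly inserted record for $(cQ,y)$ has start $j-1-|P|=j-\ell$, and it is a fresh position: it is not present before round $j$, because the preceding occurrence analysis shows $yPc$ is new. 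So $\ell=j-p^\ast$, where $p^\ast$ is the unique newly appearing position not equal to $j$, or the position $j$ itself when $\ell=0$. In Case~2 the other inserted position $j_{zc}$ may also be new, but it is $<j-\ell$ only if it is fresh. We would then resolve this case by taking the \emph{largest} new position, since $j_{zc}$ occurs earlier. Comparing two states requires holding one position set while computing the next, and this doubling of space does not hurt the asymptotic bound. I expect the careful verification that the largest newly inserted position is exactly $j-\ell$ in every case, including the case where $c$ is new, to be the delicate step.
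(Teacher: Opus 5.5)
Your part (a) is the paper's argument: Lemma~\ref{lem-reduction} recovers $LRS[j]$ from any length-annotated SSS with $O(\log n)$ extra bits, and Theorem~\ref{theorem-lower-bound-LRS} finishes the proof.

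Part (b) has a genuine gap. Your final decoding rule assumes that the start position $j-\ell$ of the new record for $(cQ,y)$ is absent from the position set before round $j$. But freshness of the \emph{string} $yPc$ does not imply freshness of the \emph{position}. Whenever $|P|=LRS[j-1]$, the old SRE $(Q,y)$ is represented by the unique occurrence of $yP$, which starts at $j-1-|P|=j-\ell$. So that position is already in the old set, and round $j$ only reassigns it from $(Q,y)$ to $(cQ,y)$.

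A concrete counterexample is $T=\$00$. The $T$-coordinate rightmost SSS is $\{1,2\}$ after round~2 (records $\langle 1,1\rangle,\langle 2,1\rangle$). It is still $\{1,2\}$ after round~3 (records $\langle 1,2\rangle,\langle 2,2\rangle$), yet $LRS[3]=1$. Your $\mathcal{M}'$ sees no new position at all. Moreover, in Case~2 with $|P|=LRS[j-1]$, the only position that can be new is $j_{zc}\le j-\ell-1$. Taking the largest new position would then return a value strictly larger than $\ell$.

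The missing idea is that you were guarding the wrong side. You already showed that $j-\ell$ belongs to the set. What you need in addition is that \emph{no selected position exceeds} $j-\ell$; positions $p<j-\ell$ are then harmless, whatever their endpoints. This follows from leftmost-ness:
\begin{itemize}
\item Suppose a selected occurrence starts at $p>j-\ell$. Then it lies entirely inside the suffix $T[j-\ell+1..j]$.
\item That suffix also occurs ending at some $j'<j$.
\item Hence the represented string also starts at $p-(j-j')\ge j'-\ell+1\ge 1$, which is earlier than $p$. This contradicts the choice of the leftmost occurrence (rightmost in $R_j$).
\end{itemize}
Therefore $LRS[j]=j-\max\{p\}$ in $T_j$-coordinates, equivalently $LRS[j]=\min RSSS(R_j)-1$ in $R_j$-coordinates, which is what the paper proves. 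This decodes $LRS[j]$ from the current state alone with $O(\log n)$ extra bits, so you need neither a comparison of consecutive states nor a copy of the previous state.
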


\begin{proof}
We prove only part~(b) here, since part~(a) has already been established.

Let \(R_i=\overleftarrow{T[1..i]}\) and
\(\ell=\mathrm{LRS}[i]\). Recall that
\({RSSS}(R_i)\) is obtained by selecting, for every SRE of
\(R_i\), the ending position of its rightmost occurrence. We first prove
that \(\mathrm{LRS}[i]=\min{RSSS}(R_i)-1\).
We distinguish two cases according to whether $\mathrm{LRS}[i]=0$.

Suppose first that \(\ell=0\). Then \(R_i[1]=T[i]\) does not occur in
\(T[1..i-1]\), and hence occurs only once in \(R_i\). Therefore,
\((\varepsilon,R_i[1])\) is an SRE whose unique occurrence ends at
position \(1\). Thus, \(1\in{RSSS}(R_i)\), and consequently
\(\min {RSSS}(R_i)=1=\ell+1\).

Now suppose that \(\ell>0\), and let \(u=R_i[1..\ell]\) and
\(a=R_i[\ell+1]=T[i-\ell]\). By the definition of \(\mathrm{LRS}[i]\),
the string \(u\) occurs at least twice in \(R_i\), but the extension \(ua=R_i[1..\ell+1]\) occurs only at the beginning of
\(R_i\).  Therefore, \(u\) is right-maximal and \((u,a)\) is
an SRE whose unique, and hence rightmost, occurrence ends at position
\(\ell+1\). It follows that
\(\ell+1\in {RSSS}(R_i)\).

It remains to show that no position smaller than \(\ell+1\) belongs to
\({RSSS}(R_i)\). Suppose, for contradiction, that some
\(k\le\ell\) belongs to \({RSSS}(R_i)\). Let \(x\) be the
string represented by the corresponding SRE, and let \(d=|x|\). Its
selected rightmost occurrence is \(R_i[k-d+1..k]\), which lies entirely
within \(R_i[1..\ell]\). Since \(R_i[1..\ell]\) has another occurrence
starting at some position \(q>1\), this contradicts
the choice of the occurrence ending at \(k\) as the rightmost occurrence
of \(x\). Therefore, every position in \({RSSS}(R_i)\) is
at least \(\ell+1\), and hence
\(\min {RSSS}(R_i)=\ell+1\). 

Thus, in both cases,
\(\mathrm{LRS}[i]=\min {RSSS}(R_i)-1\).
Given a representation of \({RSSS}(R_i)\), its minimum can
be found by enumerating the represented positions using only
\(O(\log i)\) additional bits. 
An online algorithm maintaining \(RSSS(R_i)\) with \(o(i)\) bits after round \(i\), for all inputs of length \(n\ge 2i\), would yield an online algorithm computing \(LRS[i]\) with \(o(i)+O(\log i)=o(i)\) bits, contradicting Theorem~\ref{theorem-lower-bound-LRS}. Hence, maintaining the rightmost SSS requires \(\Omega(i)\) bits after round \(i\) in the worst case whenever \(n\ge 2i\). Taking \(i=\lfloor n/2\rfloor\) yields the \(\Omega(n)\)-bit peak-working-space bound.
\end{proof}

\section{Conclusion}

In this paper, we studied the online computation of the longest repeating suffix and the online construction of smallest suffixient sets in compressed working space. We presented two worst-case space-time trade-offs for both problems: $O(r \log n + n)$ bits of working space and $O(\log^2 n / \log \log n)$ time per character, and $O(r \log n + n \log \log n)$ bits of working space and $O((\log n / \log \log n)^2)$ time per character, where $r$ is the number of runs in the BWT of the reverse of the text.


We also established a direct connection between the two problems through
reductions from online LRS computation to online maintenance of smallest
suffixient sets. We proved that any deterministic online algorithm for
computing LRS requires $\Omega(n)$ bits of peak working space in the worst
case, even over a constant-size alphabet. Through these reductions, the same
lower bound applies to maintaining either an arbitrary length-annotated
smallest suffixient set or the position-only rightmost smallest suffixient set.

Several questions remain open. One natural direction is to investigate whether the extra $\log\log n$ factor in the second trade-off is necessary: can one achieve $O((\log n/\log\log n)^2)$ worst-case time per round while using only $O(r\log n+n)$ bits of working space?
Another direction is to investigate whether similar run-length BWT-based techniques can support other online string-processing tasks that are currently handled through explicit suffix-tree constructions.



\bibliography{main-arxiv}

\end{document}